\documentclass[reprint,superscriptaddress,aps,pra,floatfix]{revtex4-2}
\usepackage{graphicx}
\graphicspath{{./}{./figs/}}
\usepackage{dcolumn}
\usepackage{bm}
\usepackage{soul}
\usepackage{mathdots}
\usepackage{booktabs} 
\usepackage{array}    
\usepackage{amsmath,amssymb}  
\usepackage[colorlinks,citecolor=blue]{hyperref}

\usepackage{enumitem}
\setlist{nosep,leftmargin=*,afterlabel=~}
\usepackage{amsthm}
\newtheorem{theorem}{Theorem}

\newtheorem{statement}[theorem]{Statement}

\newtheorem{lemma}[theorem]{Lemma}

\theoremstyle{definition}

\newtheorem{example}{Example}
\newtheorem{algorithm}{Algorithm}

\def\bs#1{\boldsymbol{#1}}

\def\diag{\mathop{\rm diag}}
\def\ord{\mathop{\rm ord}}

\def\rank{\mathop{\rm rank}}
\def\MM{\mathop{\mathrm M}\nolimits}

\def\css{\mathop{\rm CSS}}
\def\lcm{\mathop{\rm lcm}}
\def\wgt{\mathop{\rm wgt}\nolimits}
\def\res{\mathop{\rm Res}\nolimits}
\def\im{\mathop{\rm im}}

\usepackage{tabularray}
\usepackage{xcolor}

\begin{document}

\title{Algebra of Bivariate-Bicycle Surface Codes}

\author{Renyu Wang}
\affiliation{Department of Physics \& Astronomy, University of
  California, Riverside, California 92521 USA}

\author{Leonid P. Pryadko}%
\email{pryadko@google.com}
\affiliation{Google Quantum AI,  Santa Barbara, California 93117, USA}
\affiliation{Department of Physics \& Astronomy, University of
  California, Riverside, California 92521 USA}

\date{\today}

\begin{abstract}
  We relate the properties of bivariate-bicycle-surface (BBS) codes,
  constructed from a pair of bivariate polynomials over a finite
  field, to the number and location of their common roots in the
  extension field. The number of roots $(x,y)$ with finite, non-zero
  coordinates---counted with algebraic multiplicity---determines the
  dimension of the codes. This dimension is invariant under monomial
  automorphisms of the Laurent polynomial ring. Conversely, roots with
  zero or infinite $x$- or $y$-coordinates indicate that specialized 
  generators are required near the corresponding boundary (e.g., the
  left or right boundary for a root where $x$ is zero or infinite,
  respectively). These roots can appear or disappear under monomial
  transformations, which reveals the structure of tilted
  boundaries. Based on these results, we formulate a prescription for
  constructing BBS codes that works for regions with rectangular,
  diagonal, and arbitrarily tilted boundaries. A key advantage of this
  approach is that no corner corrections are needed, provided the
  polynomials satisfy orientation-specific edge conditions.
\end{abstract}

\maketitle

\section{Introduction}
\label{sec:intro}

Topological surface codes
\cite{Bravyi-Kitaev-1998,Dennis-Kitaev-Landahl-Preskill-2002} are at
the forefront of current efforts to achieve scalable quantum
computation
\cite{google-2023-suppressing,Paetznik-etal-trapped-ions-2024,%
  Bluvstein-etal-Lukin-2024,google_Quantum_AI-Y1-2025}.  Their primary
advantage is locality, requiring only gates between neighboring
qubits.  Furthermore, surface codes offer relatively high thresholds,
flexibility in both hardware implementation and decoding, and the
ability to perform encoded Clifford
gates\cite{Fowler-Mariantoni-Martinis-Cleland-2012,%
  Terhal-RMP-2015,Roffe-review-2019}.  The main disadvantage of
surface codes is their poor encoding rates, a fundamental limitation
imposed by code locality in two
dimensions\cite{Bravyi-Terhal-2009,Bravyi-Poulin-Terhal-2010}.

This encoding rate limitation is resolved in the more general
family of quantum low-density parity-check (LDPC) codes.  Bounded
stabilizer generator weights ensure fault-tolerance, provided the code
distance grows as a logarithm of the block length or
faster\cite{Kovalev-Pryadko-FT-2013,Gottesman-overhead-2014,%
  Dumer-Kovalev-Pryadko-bnd-2015}.  Quantum LDPC codes include
``good'' code families with finite encoding rates and finite relative
distances\cite{Panteleev-Kalachev-2021,Leverrier-Zemor-2022,%
  Dinur-Hsieh-Lin-Vidick-2023}.  While not as optimal asymptotically,
short quantum LDPC codes with competitive parameters can be obtained
via a two-block ansatz\cite{Kovalev-Pryadko-Hyperbicycle-2013}; this
includes generalized-bicycle (GB)
codes\cite{Panteleev-Kalachev-2019,Wang-Pryadko-2022} based on
circulant matrices, and more general abelian and non-abelian two-block
group-algebra codes\cite{Lin-Pryadko-2023}. In particular,
bivariate-bicycle (BB) codes---a subset of the latter family that
includes the IBM ``gross'' and ``double-gross'' codes---have recently
gained prominence due to their high rates and distances (which
substantially outperform surface codes) and their excellent circuit
performance\cite{Bravyi-etal-Yoder-2023,%
  Eberhardt-Steffan-2024,Symons-Rajput-Browne-2025,%
  Postema-Kokkelmans-2025,Liang-Liu-Song-Chen-2025}.

Conversely, the non-local stabilizer generators make hardware
implementation of quantum LDPC codes difficult.  For the original
family of BB codes\cite{Bravyi-etal-Yoder-2023}
constructed from a pair of bivariate polynomials of weight 3, each
qubit is addressed by six stabilizer generators of constant weight
$w=6$.  With a measurement scheme using one ancillary qubit per
stabilizer generator, the qubit connectivity graph can be represented
as a torus with a nearest-neighbor square-lattice tiling, augmented by
two additional ``long-range'' edges per qubit.  While such graphs are
highly symmetric, with a planar qubit layout the symmetry is
necessarily broken by the edges spanning the wrapped boundaries.  The
need for such edges not only complicates the hardware implementation,
but also rigidly constrains the code size on a given chip.

For conventional toric codes, this issue was resolved by introducing
surface codes with rough and smooth
boundaries\cite{Bravyi-Kitaev-1998}. Not only does this remove the
need for long-range boundary connections, but it also substantially
increases flexibility, as the same fixed square-lattice qubit layout
can be used to implement a wide variety of codes
\cite{Delfosse-Iyer-Poulin-2016}.  In the case of BB codes, codes with
open boundaries have also been
investigated\cite{Eberhardt-Pereira-Steffan-2024,%
  Steffan-etal-Eberhardt-2025,Liang-Eberhardt-Chen-2025,%
  Breuckmann-etal-Steffan-2025}.  However, an analytical understanding
of BB codes with boundaries and their resulting parameters is 
currently available only when they correspond to hypergraph-product (HP)
codes\cite{Tillich-Zemor-2014}.  While the graphical tile construction
\cite{Steffan-etal-Eberhardt-2025,Breuckmann-etal-Steffan-2025} is
straightforward to use, it only applies to a very limited set of
polynomials.

The goal of this work is to relate the algebraic properties of the
defining polynomials to the structure of the corresponding codes with
boundaries.  We provide a first-principles analysis of the structure
of non-trivial chains and co-chains (``codewords'') on finite-width
strips in parent BB complexes on the infinite plane, alongside the
corresponding solutions next to a smooth or rough boundary constructed
by truncating $X$ or $Z$ generators, respectively.  When the
polynomials are mutually prime (the topological order condition on the
infinite plane), we formulate a condition for the boundaries to
support no bounded-weight codewords, as well as a simple prescription
for modifying edge generators when this condition fails.  For
horizontal boundaries ($y=\text{constant}$), this condition requires
that the two polynomials share no common roots $(x,y)$ where $y$ is
zero or infinite.  Such roots may appear or disappear under invertible
monomial transformations of the polynomials, which reveals the
underlying structure of tilted boundaries.  In contrast, the total
number of roots with both $x$ and $y$ finite and non-zero, counted
with algebraic multiplicity, is invariant under such transformations;
it gives both the topological order (TO) total quantum dimension and the
dimension of the BBS code families that can be constructed from a
given pair of polynomials.

The paper is organized as follows. In Section \ref{sec:bg} we list
some background facts and introduce necessary notations.  In
Section~\ref{sec:covering-plane}, we analyze the BB complex on the
infinite plane, derive the exactness condition (equivalent to
ground-state topological order in related Hamiltonians), and consider
the structure of certain infinite-weight solutions of the associated
equations.  Section~\ref{sec:with-boundaries} transitions to finite
geometries, detailing the structure of boundaries and their associated
codewords. In Section~\ref{sec:construction}, we introduce our
algorithm for constructing BBS codes, benchmarking it against existing
methods---such as tile codes
\cite{Steffan-etal-Eberhardt-2025,Breuckmann-etal-Steffan-2025} and
open-boundary codes \cite{Liang-Eberhardt-Chen-2025}---and evaluating
the parameters of the resulting codes. We conclude in
Section~\ref{sec:conclusion} by summarizing the trade-offs inherent to
BBS codes.  Formal proofs and supplementary derivations are deferred
to the Appendices.

\section{Notations}
\label{sec:bg}

\subsection{Classical codes, puncturing, and shortening}
\label{sec:bg_classical}

Let $F\equiv \mathbb{F}_q$ denote a finite field of order $q=p^m$ and
prime characteristic $p$. An $F$-linear code ${\cal C}$ with
parameters $[n,k,d]$ is a linear space of dimension $k$ formed by
$n$-component vectors
$\boldsymbol{c}\equiv (c_0,c_1,\ldots,c_{n-1})\in F^n$. We say that a
code ${\cal C}\equiv{\cal C}_G$ is generated by a matrix $G$ if its
linearly independent rows form a basis of the code, and we denote its
parity check matrix as $H$, such that $GH^T=0$ and $\rank H=n-k$.

Given an $F$-linear code $\cal C$ and an index set ${\cal A}$ of size
$n'\equiv|{\cal A}|$, we define the punctured code
${\cal C}_{\rm p}({\cal A})$ and shortened code
${\cal C}_{\rm s}({\cal A})$ in the standard way. Important for our
purposes is that duality is maintained as
$$
{\cal C}^\perp_{\rm p}({\cal A}) = [{\cal C}_{\rm s}({\cal A})]^\perp,\quad {\cal C}^\perp_{\rm s}({\cal A}) = [{\cal C}_{\rm p}({\cal A})]^\perp.
$$
Thus, with mutually dual matrices $G$ and $H$, the punctured matrix
$G[{\cal A}]$ is also a parity check matrix of the shortened dual code
${\cal C}^{\perp}_{\rm s}({\cal A})$, while $H[{\cal A}]$ is a check
matrix of the shortened code $\mathcal{C}_{\rm s}({\cal A})$. If we
denote $G_{\cal A}$ a generator matrix of the shortened code
$C_{\rm s}({\cal A})$, then
\begin{equation}
  H[{\cal A}]\, G_{\cal A}^T=0,\quad \rank H[{\cal A}] +\rank G_{\cal A} =|{\cal A}|.
  \label{eq:setA-duality}
\end{equation}

\subsection{Polynomial and abelian group-algebra codes}
\label{sec:ga-codes}

For a given finite field $F$ and a finite abelian group $\cal G$ of order $|{\cal G}|=n$, the group algebra $F[{\cal G}]$ is defined as the $F$-linear space of all formal sums $x\equiv \sum_{g\in \cal G}x_g g$, where $x_g\in F$. An abelian $\cal G$-code in $F^n$ is an ideal $\cal J$ in the abelian ring $F[{\cal G}]$.

Given $a\in F[\mathcal{G}]$, we define its $n\times n$ matrix representation $A\equiv \MM(a)$ by its action on group elements:
\begin{equation}
  \label{eq:group-action}
  [\MM(a)]_{\alpha,\beta}\equiv \sum_{g\in \cal G}a_g\delta_{\alpha,g\beta}.
\end{equation}

\noindent{\bf Circulant matrices}: In the cyclic case
$\mathcal{G} = C_n$, $F[C_n]$ is isomorphic to the quotient ring
$R=F[x]/(x^n-1)$. A cyclic polynomial code ${\cal C}_{g(x)}$ is an
ideal in $R$ generated by $g(x)$. Its generator matrix $G$ satisfies
$G^T=\MM\biglb(g(x)\bigrb)=g(P_n)$, where $P_n$ is the order-$n$
cyclic shift matrix. For convenience in mapping to integer lattices,
we will utilize the ring of Laurent polynomials $F[x,x^{-1}]$, where
the reciprocal polynomial is denoted $\widetilde{a}(x) \equiv
a(1/x)$. Matrix transposition naturally corresponds to taking this
reciprocal: $[\MM(a)]^T = \MM(\widetilde{a})$.

\noindent{\bf Bivariate group matrices}: Consider a finite abelian
group with a two-generator presentation:
\begin{equation}
{\cal G}=\langle x,y\,|\, r_1(x,y)=r_2(x,y)=[x,y]=1\rangle,\label{eq:two-gen}
\end{equation}
where $r_j(x,y)\equiv x^{\Delta^{(j)}_{1}}y^{\Delta^{(j)}_{2}}$ and
$[x,y]\equiv xyx^{-1}y^{-1}$. Elements of the corresponding free
abelian group are in a one-to-one correspondence with the points of
the integer plane $\mathbb{Z}^2$. The exponents of $r_i(x,y)$ form
basis columns of a lattice matrix $\hat\Delta$. The group elements
correspond to inequivalent points in $\mathbb{Z}^2$ forming a torus
${\cal T}_2$ with size $n=\bigl|\det\hat\Delta\bigr|$. A presentation
where $r_i=x_i^{\Delta_i}$ corresponds to a direct product
${\cal G}=C_{\Delta_1}\times C_{\Delta_2}$, giving a torus with
periodicity vectors along the Cartesian axes.

The support of a bivariate Laurent polynomial
$a(x,y)\in F[ x^{\pm1},y^{\pm1}]$ maps to a set of points on
$\mathbb{Z}^2$. Multiplication by $x$ or $y$ translates along the
respective axis, while adding periodicity vectors preserves relative
locations modulo the quotient ring relations. In this bivariate case,
the matrix transposition rule extends as
$[\MM(a)]^T = \MM(\widetilde{a})$, where
$\widetilde{a}(x,y) \equiv a(x^{-1}, y^{-1})$.

\subsection{Quantum CSS codes}
\label{sec:CSS}

A quantum Calderbank-Shor-Steane (CSS)
code\cite{Calderbank-Shor-1996,Steane-1996} ${\cal Q}=\css(H_X,H_Z)$
over a finite field $F$, with parameters $[[n,k,d_X/d_Z]]$ or
$[[n,k,d]]$, where $d=\min(d_X,d_Z)$, is constructed from a pair of
$n$-column matrices over $F$ with mutually orthogonal rows,
\begin{equation}
  \label{eq:orthogonality-CSS} H_X H_Z^T=0.
\end{equation} Such a code is isomorphic to a direct sum ${\cal
  Q}={\cal Q}_X\oplus {\cal Q}_Z={\cal C}_{H_Z}^\perp/{\cal
  C}_{H_X}\oplus {\cal C}_{H_X}^\perp/{\cal C}_{H_Z}$. A non-trivial
codeword $\bs c\in \mathcal{Q}_Z$ satisfies
\begin{equation}
  \label{eq:CSS-cz} H_X \bs c=0,\quad \bs c\neq H_Z^T\bs \alpha,\;
\forall \bs \alpha\in F^r,
\end{equation} where $r$ is the number of rows in $H_Z$. The quantum
code dimension is
\begin{equation}
  \label{eq:k-CSS} k=n-\rank H_X-\rank H_Z,
\end{equation} and the CSS distances are the minimum weights of
non-trivial vectors in ${\cal C}_{H_Z}^\perp$ and ${\cal
  C}_{H_X}^\perp$, respectively:
\begin{equation}
  \label{eq:d-CSS} d_X=\min_{\bs c\in \mathcal{C}_{H_Z}^\perp\setminus
\mathcal{C}_{H_X}}\wgt \bs c,\quad d_Z=\min_{\bs c\in
\mathcal{C}_{H_X}^\perp\setminus \mathcal{C}_{H_Z}}\wgt \bs c.
\end{equation} Logical generator matrices $L_X$ and $L_Z$ of size
$k\times n$ satisfy:
\begin{equation}
  \label{eq:logical} L_XH_Z^T=0,\quad L_ZH_X^T=0,\quad \rank
(L_XL_Z^T)=k.
\end{equation}

Physically, the elements of ${\cal C}_{H_X}$ and ${\cal C}_{H_Z}$ map
to Pauli $X$ and $Z$ operators that generate an abelian stabilizer
group ${\cal S}$; taking these generators with a negative sign defines
a commuting Hamiltonian whose ground state manifold coincides with the
code space.

\subsection{CSS code as a chain complex}

Quantum Galois-qudit CSS codes over a finite field $F$ naturally
correspond to bounded chain complexes of finite-dimensional vector
spaces over $F$. We define a $D$-complex
${\cal A}\equiv {\cal K}(A_1,\ldots,A_D)$ in terms of
$n_{j-1}\times n_j$ matrices $A_j$ over $F$ serving as boundary
operators, with fixed-basis spaces $\mathcal{A}_j$ of dimension $n_j$:
\begin{equation}
  \label{eq:chain-complex}
  {\cal A}:\;
  \ldots \leftarrow\{0\}\stackrel{\partial_0}\leftarrow {\cal A}_0\stackrel{A_1}\leftarrow {\cal A}_1\ldots \stackrel{A_{D}}\leftarrow {\cal A}_{D}\stackrel{\partial_{D+1}}\leftarrow \{0\} \ldots
\end{equation}
where $A_{j-1}A_{j}=0$ for $j\in\{2,\ldots,D\}$. The $j$-th homology group is defined as
$$H_j({\cal A})\equiv H(A_j,A_{j+1}) = \ker(A_{j})/\im(A_{j+1}),$$
with rank:
\begin{equation}
  \label{eq:homo-rank}
  k_j\equiv \rank H_j(\mathcal{A})=n_j-\rank A_j-\rank A_{j+1}  .
\end{equation}

The co-chain complex $\widetilde{\cal A}$, formed by the transposed
matrices $A_j^T$ in reverse order, yields the co-homology group
$\widetilde{H}_j(\widetilde{\cal A})=H(A_{j+1}^T,A_j^T)$ of the same rank.

A quantum CSS code with stabilizer generator matrices $H_X=A_j$ and
$H_Z=A_{j+1}^T$ is isomorphic to the direct sum of these groups:
\begin{equation}
  \label{eq:css-code-homology}
\css(A_j,A_{j+1}^T)\cong H(A_j,A_{j+1})\oplus H(A_{j+1}^T,A_j^T).
\end{equation}
The two terms correspond to $Z$ and $X$ logical operators, respectively.

\subsection{Two-block codes}\label{sec:bgCSS}

Quantum two-block codes\cite{Kovalev-Pryadko-Hyperbicycle-2013} are CSS codes defined in terms of two commuting square matrices over $F$,
\begin{equation}
\label{HxHz}
    H_{X}=
    \begin{pmatrix}
      A & B \\
    \end{pmatrix}, \quad
    H_{Z}^T=
    \begin{pmatrix}
      B\\
      -A
    \end{pmatrix}.
\end{equation}
We focus on special cases that map to vertex-transitive planar codes: generalized-bicycle (GB) codes, bivariate-bicycle (BB) codes, and translation-invariant HP codes. In all of these cases, the group algebra elements are polynomials $a,b\in F[x^{\pm 1},y^{\pm 1}]$, with commuting variables subject to the group relators $r_1(x,y)=r_2(x,y)=1$. A non-trivial $Z$-codeword $\bs c=[\bs u,\bs v]$ corresponds to a pair of polynomials $(u,v)$ satisfying the CSS equations:
\begin{eqnarray}
  \label{eq:CSS-poly1}
  0&=&\left(a,b\right) \begin{pmatrix}u\\ v
  \end{pmatrix}
  \bmod \biglb(r_1-1,r_2-1\bigrb),\\
  \label{eq:CSS-poly2}
\begin{pmatrix}  u\\v
\end{pmatrix}
   &\neq& \alpha\begin{pmatrix}b\\ -a
   \end{pmatrix}
  \bmod \biglb(r_1-1,r_2-1\bigrb),
\end{eqnarray}
where $\alpha\equiv \alpha(x,y)$ is an arbitrary polynomial.

These code families differ primarily by their group presentation. GB codes rely on a single-variable cyclic shift representation $A=a(P_\ell), B=b(P_\ell)$. BB codes are quasi-abelian codes over a group with a two-generator presentation, while HP codes emerge when the group is a direct product of two cyclic groups, allowing $A$ and $B$ to decompose into independent variables $a=a(x)$ and $b=b(y)$.

\noindent{\bf Graphical representation}: For BB codes, the elements of the free abelian group naturally map to points on the integer plane $\mathbb{Z}^2$. As illustrated in Fig.~\ref{fig:planar}, we identify the rows of $H_X$ with vertices of the square lattice, the left and right matrix blocks with horizontal and vertical edges, and the rows of $H_Z$ with the plaquettes. 

\begin{figure}[htbp]
    \centering
    \includegraphics[width=1.0\columnwidth]{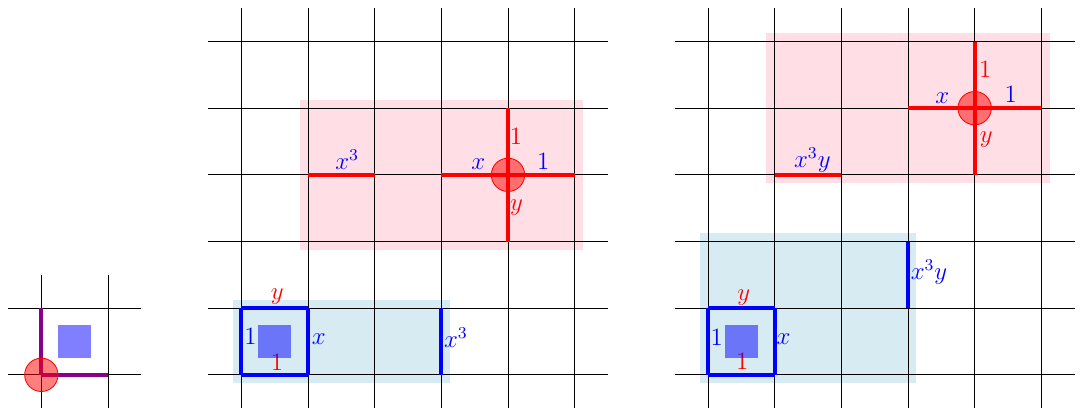}
    \caption{Planar layout of BB codes. Left: Marked vertex, two
      edges, and shaded plaquette correspond to the same lattice
      coordinates. Center: Red and blue edge patterns respectively
      correspond to $X$ and $Z$ stabilizer generators of a
      vertex-transitive HP code obtained from polynomials
      {\color{blue}$a(x,y)=1+x+x^3$}, {\color{red}$b(x,y)=1+y$}; edges
      corresponding to the monomials are labeled with matching colors,
      blue for $a$ and red for $b$. Right: Same for a BB code from
      polynomials {\color{blue}$a(x,y)=1+x+x^3y$},
      {\color{red}$b(x,y)=1+y$}.}
    \label{fig:planar}
\end{figure}

\noindent{\bf Equivalences and automorphisms}: Abelian two-block
group-algebra codes possess numerous equivalent forms that alter this
graphical representation. These arise from rescaling transformations
\begin{equation}
  a\to \alpha g a \text{ and } b\to \beta h b,\label{eq:translation}
\end{equation}
for arbitrary field elements $\alpha,\beta\in F$ and monomials
$g, h\in {\cal G}$, where the latter correspond to independent
sublattice translations. Furthermore, group automorphisms correspond
to invertible monomial substitutions. For a two-generator group, this
amounts to a unimodular transformation with integer exponents,
$(x, y) \mapsto (x^m y^n, x^p y^q)$ where $|mq - np| = 1$. Such
substitutions allow us to shift polynomials into canonical forms
without altering the code parameters.

Finally, under CSS duality ($H_X\leftrightarrow H_Z$), the two-block matrices transform as $A\to B^T$ and $B\to -A^T$. Using the transposition rule $[\MM(a)]^T = \MM(\widetilde{a})$, this yields an equivalent code constructed from the reciprocal polynomials:
\begin{equation}
  \label{eq:duality-CSS}
 a'=\widetilde{b},\quad  b'=-\widetilde{a}.
\end{equation}

\section{Two-block chain complex on the infinite plane}
\label{sec:covering-plane}

In a toric or a surface code, any non-trivial codeword must either be
a homologically non-trivial cycle or connect two boundaries. To ensure
a similar property for families of BB or BBS codes, we formulate the
corresponding condition on the infinite plane, the covering manifold
for all such codes. Namely, we associate a family of BBS codes with a
pair of bivariate Laurent polynomials $a\equiv a(x,y)$ and
$b\equiv b(x,y)$ in $F[x^{\pm1},y^{\pm1}]$, defining the {\em parent
  chain complex}
\begin{equation}
  {\cal P}\equiv   {\cal P}(a,b)\equiv
  \{0\}\stackrel{\partial_0}\leftarrow {\cal P}_0
  \stackrel{\partial_1}\leftarrow {\cal P}_1
  \stackrel{\partial_2}\leftarrow {\cal P}_2
  \stackrel{\partial_3}
  \leftarrow\{0\}.
  \label{eq:cover-complex}
\end{equation}
Here, the $F$-linear vector spaces ${\cal P}_0$,
${\cal P}_1=\mathcal{U}\oplus\mathcal{V}$, and ${\cal P}_2$ are
indexed by vertices, horizontal and vertical edges, and plaquettes of
the infinite square lattice $\mathbb{Z}^2$, with elements of vector
spaces naturally represented by Laurent polynomials.

The boundary operators $\partial_0$ and $\partial_3$ are trivial,
$\ker(\partial_0)=\mathcal{P}_0$ and $\im(\partial_3)=\{\bs 0\}$,
while the remaining two boundary operators act as
\begin{eqnarray}
  \label{eq:boundary-op1}
  &\partial_1:& t(x,y)=\biglb( a(x,y),b(x,y)\bigrb)\begin{pmatrix} u(x,y)\\ v(x,y)
  \end{pmatrix}, \\
  \label{eq:boundary-op2}
  &\partial_2:& \begin{pmatrix} u(x,y)\\v(x,y)  \end{pmatrix}
  = \begin{pmatrix}b(x,y)\\ -a(x,y)  \end{pmatrix} w(x,y).
\end{eqnarray}
Pairs of polynomials generated by $\partial_2$ correspond to elements
of the parent stabilizer group $\overline{\cal S}_Z$. The stabilizer
group $\overline{\cal S}_X$ is obtained from the co-chain complex
$\widetilde{\cal P}$ dual to $\mathcal{P}$, with elements
corresponding to polynomials of the form
$\xi(x,y)[\widetilde{a}(x,y), \widetilde{b}(x,y)]^T$. Just as in the
finite case (see Sec.~\ref{sec:bgCSS}), equivalent parent complexes
can be obtained via sublattice permutations, sublattice translations, and
automorphisms of $\mathbb{Z}^2$.

The condition for the absence of bulk codewords—meaning the complex is
exact—is guaranteed by the following Lemma, formulated for a free
abelian group with $D$ generators:

\begin{lemma}[The bulk condition]
  \label{th:bulk}
  Let $F$ be a finite field and
  $a(\bs x),b(\bs x)\in R\equiv F[\bs x, \bs x^{-1}]$ a pair of
  Laurent polynomials over $D$ variables,
  $\bs x\equiv (x_1,x_2,\ldots, x_D)$. Any pair of polynomials
  $u(\bs x), v(\bs x)\in R$ solving the equation
  $a(\bs x) u(\bs x)+b(\bs x) v(\bs x)=0$ can be written in the form
  \begin{equation}
    \begin{pmatrix}u(\bs x)\\ v(\bs x)
    \end{pmatrix}
    =\xi(\bs x) \begin{pmatrix}b(\bs x)\\ -a(\bs x) 
    \end{pmatrix},\quad
  \xi(\bs x)\in R,\label{eq:degeneracy}
\end{equation}
if and only if $a(\bs x)$ and $b(\bs x)$ have no common non-unit
factors.
\end{lemma}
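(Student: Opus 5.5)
The argument uses the fact that the Laurent polynomial ring $R=F[\bs x,\bs x^{-1}]$ is a unique factorization domain. It is the localization of the polynomial ring $F[\bs x]$, which is a UFD by Gauss's lemma, at the multiplicative set of monomials, and localizations of UFDs are UFDs. Its units are exactly the nonzero scalar multiples of monomials $\alpha\,\bs x^{\bs m}$. The statement is an equivalence, so I will prove the two directions separately.

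\emph{Sufficiency.} Suppose $a$ and $b$ have no common non-unit factor. First dispose of the degenerate cases. If $a=0$, then $b$ must be a unit, since otherwise $b$ itself is a common factor ($b$ divides $0$). The equation then forces $v=0$, and we take $\xi=-u\,b^{-1}$ with $u$ arbitrary. The case $b=0$ is symmetric. Now let both be nonzero. From $au=-bv$, every irreducible factor $p$ of $b$, taken with its multiplicity, divides $au$. Since $p$ shares nothing with $a$, unique factorization gives $b\mid u$; write $u=\xi b$. Substituting, $a\xi b=-bv$. Cancelling $b$, which is allowed because $R$ is an integral domain, gives $v=-\xi a$. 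This is exactly Eq.~(\ref{eq:degeneracy}).

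\emph{Necessity.} Suppose $a=g a'$ and $b=g b'$ with $g$ a non-unit, and $a,b$ not both zero. Take $(u,v)=(b',-a')$, which solves $au+bv=0$. If this pair had the form $\xi(b,-a)$, then $b'=\xi g b'$ and $a'=\xi g a'$. Since at least one of $a',b'$ is nonzero, cancelling gives $\xi g=1$, so $g$ would be a unit, a contradiction. In the degenerate case $a=b=0$, every pair $(u,v)$ is a solution but only $(0,0)$ is of the form $\xi(b,-a)$, so the representation fails there as well. This matches the convention that $0$ has non-unit common factors.

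The only real obstacle is establishing unique factorization in $R$ and pinning down its units. I will cite it as standard rather than reprove it: $R\cong F[\bs x][\bs m^{-1}]$, and primes of $F[\bs x]$ other than the monomial primes $x_i$ remain prime in $R$. The argument does not use finiteness of $F$. The rest is routine, provided the zero-polynomial edge cases are stated explicitly as above.
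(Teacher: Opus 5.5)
Your proof is correct and follows the same route as the paper, which simply cites unique factorization; you supply the details the paper omits: $R$ itself is a UFD as a localization of $F[\bs x]$, the converse via $(u,v)=(b',-a')$, and the zero-polynomial edge cases. One trivial slip: when $a=0$ the representation reads $(u,v)=\xi(b,0)$, so you need $\xi=u\,b^{-1}$, not $-u\,b^{-1}$.
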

This follows directly from the fact that the ring of multivariate
polynomials $F[\bs x]$ is a unique factorization domain (UFD).

Under the conditions of Lemma \ref{th:bulk}, the covering chain
complex (\ref{eq:cover-complex}) is exact, meaning it contains no
finite-weight non-trivial vectors. In the context of topological order
(TO), the parent complex (\ref{eq:cover-complex}) defines a
translationally-invariant Hamiltonian in 2D. The bulk condition in
Lemma~\ref{th:bulk} with $D=2$ is precisely the
condition\cite{Haah-2013,Liang-Xu-Iosue-Chen-2024} for TO, formulated
specifically for BB codes by Liang et
al.\cite{Liang-Liu-Song-Chen-2025}, who also give the total quantum
dimension associated with the TO in the ground state manifold on
the infinite plane as the dimension of a quotient space:
\begin{equation}
  k_{\rm TO} = \dim_F\frac{F[x^{\pm 1},y^{\pm 1}]}{ \langle a,b\rangle}. \label{eq:BBS-k}
\end{equation}
For a finite result, the ideal generated by $a$ and $b$ must be
zero-dimensional, which is equivalent to the condition in
Lemma~\ref{th:bulk} that the polynomials are mutually prime. To
compute this dimension in practice, one maps the problem to the
standard polynomial ring $F[x,y]$, where the dimension of the quotient
space $F[x,y]/\langle a,b\rangle $ can be determined using a Gr\"obner
basis. This dimension equals the total number of joint roots in the
extension field. However, because our parent complex operates over the
Laurent ring $F[x^{\pm1},y^{\pm1}]$---where the variables represent
invertible lattice translations---any roots at zero or infinity are
strictly disallowed. Therefore, the dimension
$k_{\rm TO}$  corresponds
only to the number of non-zero, finite joint roots, counted with their
multiplicity (see Ref.~\onlinecite{Haah-2013} and Appendix
\ref{app:root-count} for a precise formulation in the present
context).

\subsection{Single-sublattice patterns}
\label{sec:one-sublattice}

Given a pair of polynomials $a,b\in F[x^{\pm1},y^{\pm1}]$ with
$a\neq0$, consider a cycle in the parent chain complex
$\mathcal{P}(a,b)$ restricted to horizontal edges. This amounts to
solving Eq.~(\ref{eq:boundary-op1}) with $t=v=0$, yielding:
\begin{equation} a(x,y) u(x,y)=0.\label{eq:u-chain}
\end{equation}
While this equation admits no non-zero solutions in
$F[x^{\pm1},y^{\pm1}]$, it can be resolved in every point of
$\mathbb{Z}^2$ to yield infinite-weight vectors.  Because $a\neq0$, any
non-zero locally finite solution is automatically non-trivial.

The geometrical structure of these solutions is strictly dictated by
the support of $a(x,y)$. If the monomials in $a(x,y)$ are
\emph{aligned} along a single direction---meaning $a(x,y)$ can be
expressed as a univariate polynomial $a_1(t)$ in some combined
variable $t=x^i y^j$---the solutions to Eq.~(\ref{eq:u-chain}) are
one-dimensional, infinite-weight periodic chains. The number of
linearly independent infinite chain solutions is given exactly by the
\emph{degree spread} of the polynomial,
$\Delta_{\rm deg} (a_1)\equiv \deg_{\rm max}(a_1)-\deg_{\rm
  min}(a_1)$, which is also the number of finite non-zero roots of
$a_1$.

Conversely, when the monomials in $a(x,y)$ are not aligned,
non-trivial solutions form two-dimensional
fractals\cite{Yoshida-annals-2013,Yoshida-2013,Kalachev-Panteleev-2020,Symons-Rajput-Browne-2025}. For
instance, a binary polynomial $a=1+x+y$ generates a Sierpi\'nski
triangle pattern. The net weight of such a pattern in a ball of radius
$\rho$ scales as $\mathcal{O}(\rho^\delta)$, where $\delta$ is the
fractal dimension, yielding an upper distance
bound\cite{Kalachev-Panteleev-2020} $d\le
\mathcal{O}(n^{\delta/2})$. For sufficiently long codes, these fractal
bounds are superseded by the Bravyi-Terhal
bound\cite{Bravyi-Terhal-2009} arising from finite-width chain
solutions, which we analyze next.

\subsection{Horizontal finite-width chain solutions}
\label{sec:strip-cartesian}

Consider a pair of mutually prime polynomials
$a,b\in F[x^{\pm1},y^{\pm1}]$ that satisfy the bulk condition in Lemma
\ref{th:bulk}. Rescale them to contain only non-negative $x$ and $y$
degrees with non-zero constant terms $a(0,0)\neq0$, $b(0,0)\neq0$. We
define the maximum $y$-degree $\Delta\equiv \max(\deg_ya, \deg_yb)$,
and assume without loss of generality that
$\Delta=\deg_y a\ge \deg_yb$.

To construct a finite-width $Z$-chain fitting within a horizontal
strip of width $m$, we expand the boundary equations in powers of $y$,
taking coefficients in the univariate polynomial ring $F[x]$. Ordering
the variables $u_j,v_j\in F[x]$ as
$\bs c\equiv[u_0,v_0,\ldots,u_{m-1},v_{m-1}]^T$, the condition for a
non-trivial chain is:
\begin{equation}
  \label{eq:strip-general}
  {\sf M}_X \bs c=0,\quad \bs c\neq  {\sf M}_Z^T\bs \alpha,
\end{equation}
where the banded CSS matrices over $F[x]$ are given by:
  \begin{eqnarray}
    \label{eq:banded-H1}
    {\sf M}_X
    &=&\left(
        \def\ad{a_{\Delta}}
        \def\bd{b_{\Delta}}
        \begin{array}[c]{cc|cc|c|cc|cc}
          a_0&b_0&   \\ \cline{3-4}
          a_1&b_1&a_0&b_0 \\
          :  &:  &a_1&b_1 \\ \cline{6-7}
          \ad&\bd&: &:& \;{\scriptscriptstyle{\boldsymbol\ddots}}\;\strut &a_0 &b_0  \\
          \cline{1-2}\cline{8-9}
             &   & \ad&\bd&&a_1 &b_1&a_0 &b_0 \\
          \cline{3-4}
             &   &   &   &   & : &:  &a_1  &b_1   \\ 
             &   &   &   &   &\ad&\bd & :&: \\ \cline{6-7}
             &   &   &   &   &   &   &   \ad&\bd\\
        \end{array}\right),\hbox{\qquad}
  \end{eqnarray}
  \begin{eqnarray}
    \label{eq:banded-H2}
    {\sf M}_Z^T
    &=& \left(
        \begin{array}[c]{cccccc}
          b_0     & \\
          - a_0& \\ \cline{2-2}
          b_1& b_0 \\
          - a_1&- a_0&\\ 
          {\scriptscriptstyle\boldsymbol\vdots}& b_1& {\scriptscriptstyle\boldsymbol\ddots} \\ \cline{4-4}
          {\scriptscriptstyle\boldsymbol\vdots}&- a_1 & & b_0\\
          b_{\Delta} &         {\scriptscriptstyle\boldsymbol\vdots}& &- a_0\\
  
          - a_{\Delta}&      {\scriptscriptstyle\boldsymbol\vdots}& &b_1\\         \cline{1-1}
                  & b_{\Delta} &&- a_1\\     

                  & - a_{\Delta}&
             &{\scriptscriptstyle\boldsymbol\vdots}\\ \cline{2-2}
                  & & {\scriptscriptstyle\boldsymbol\ddots}
             &
               {\scriptscriptstyle\boldsymbol\vdots}\\
                  & &      & b_{\Delta}\\
                  & &  & - a_{\Delta}\\
         \end{array}\right),
\end{eqnarray}
with $b_\Delta=0$ if $\deg_y(b)<\Delta$. These matrices satisfy
${\sf M}_X{\sf M}_Z^T=0$, defining a chain complex over $F[x]$.

Because $F[x]$ is a principal ideal domain, we can apply a basis
transformation to the intermediate vector space that simultaneously
brings both boundary operators into Smith normal form (SNF) while
explicitly preserving their orthogonality (${\sf M}_X{\sf
  M}_Z^T=0$). This ensures that for each aligned column index $j$, the
corresponding SNF invariants $\delta_j(x)$ of ${\sf M}_X$ and
$\mu_j(x)$ of ${\sf M}_Z$ cannot be simultaneously non-zero.

This mutually exclusive pairing strictly dictates the supported
solutions. A non-unit invariant $\delta_j(x) \neq 0$ (which implies
$\mu_j=0$) with a degree spread
$\kappa_j' \equiv \Delta_{\rm deg} \delta_j(x) > 0$ generates exactly
$\kappa_j'$ linearly independent infinite-weight periodic chains and
$\kappa_j'$ finite-weight polynomial co-chains. Conversely, the dual
case ($\delta_j=0, \mu_j\neq 0$) yields finite chains and infinite
co-chains.

By variable counting, a non-trivial solution generically exists for
$m\ge \Delta$. When $m=\Delta$, ${\sf M}_Z$ is empty (meaning
$\mu_j=0$ for all $j$), and ${\sf M}_X$ is a square Sylvester
matrix\cite{Sylvester-1853} whose determinant---the product of its SNF
invariants $\delta_j(x)$---is proportional to the $y$-resultant of the
two polynomials:
\begin{equation}
  \det {\sf M}_X^{(\Delta)}=\pm a_\Delta(x)^{\Delta-\deg_y b}\res_y(a,b).
  \label{eq:det-Mx}
\end{equation}
This establishes a direct link between the resultant and the supported
solutions. Assuming $a_\Delta(x)$ is unit, the degree spread of the
resultant, $\kappa = \Delta_{\rm deg} \res_y(a,b)$, is exactly the sum
of the degree spreads of the individual invariants $\delta_j(x)$ of
${\sf M}_X$. Thus, $\kappa$ precisely counts the number of independent
infinite periodic chains (and dual polynomial co-chains) accommodated
by the strip. Furthermore, $\det {\sf M}_X^{(\Delta)}=0$ if and only
if $a$ and $b$ share a common factor.  This is consistent with the
absence of finite-weight bulk codewords being equivalent to the bulk
condition (Lemma \ref{th:bulk}).

For $m > \Delta$, we classify the invariant factors of these boundary
operators via the following statements:

\begin{statement}
  \label{th:SNF-Mx}
  Assuming polynomials $a,b\in F[x,y]$ are mutually prime,
  not identically zero at $y=0$, and have $y$ degrees
  $\Delta\equiv \deg_y a \ge \deg_y b$, the matrix ${\sf M}_X$ has
  exactly $m+\Delta$ non-zero SNF invariants. If, in addition, we
  assume one of the edge conditions,
  \begin{equation}
    \gcd\biglb(a_0(x),b_0(x)\bigrb)=1\;\,\text{or}\;\,
    \gcd\biglb(a_\Delta(x),b_\Delta(x)\bigrb)=1,\label{eq:SNF-Mx}
  \end{equation}
  then the first $m-\Delta$ SNF invariants equal $1$, while the
  remaining $2\Delta$ invariant factors identically coincide with
  those of the square matrix ${\sf M}_X^{(\Delta)}$.
\end{statement}

\begin{statement}
  \label{th:SNF-Mz}
  For any $m>\Delta$, assuming $a\neq0$ or $b\neq0$, the matrix
  ${\sf M}_Z$ has exactly $m-\Delta$ non-zero SNF invariants.  If, in
  addition, polynomials $a,b\in F[x,y]$ are mutually prime, these SNF
  invariants are all equal to $1$.
\end{statement}

To clarify the role of the factor $a_\Delta(x)$ in the determinant of
the matrix ${\sf M}_X$ when $m=\Delta$ [Eq.~(\ref{eq:det-Mx})]: this
extra term appears when the upper boundary is ``unbalanced,'' meaning
$\deg_y b < \Delta$.  If we resolve this by dropping appropriate
qubits near the top edge, the matrix reduces exactly to a Sylvester
matrix. In this case, we have
$\kappa' \equiv \Delta_{\rm deg} \res_y(a,b)$, the degree spread of
the $y$-resultant of the two polynomials.  Furthermore, provided
either of the edge conditions in Statement~\ref{th:SNF-Mx} is
satisfied, exactly this same number of solutions, $\kappa'$, exists in
a strip of any width $m>\Delta$, as long as the upper boundary stays
similarly balanced.

\section{Two-block chain complex with boundaries}
\label{sec:with-boundaries}

\subsection{Horizontal strips and edge-localized solutions}
\label{sec:strip-smooth-rough}

In a finite CSS code, boundaries can be added by reducing the code to
a subset of qubits. If we denote such a subset as region
$\mathcal{A}$, a smooth boundary is formed by puncturing the $X$
stabilizer group, $S_X\to S_X[{\cal A}]$, and shortening the $Z$
stabilizer group, $S_Z\to (S_Z)_{\cal A}$, while a rough boundary is
defined by the exact CSS-dual procedure. Shortening an abelian group
requires row operations on the corresponding generator matrix to find
generators supported strictly within $\mathcal{A}$. However, for our
translationally invariant parent complex on the infinite plane, with
region ${\cal A}$ an infinite strip, the generators that would
normally require these row operations can be pushed to infinity. Thus,
for a \emph{bare} (or uncorrected) boundary on the infinite strip, it
is sufficient instead of the shortening, to just drop any generators
that cross the boundary.

In particular, for a strip of width $m \ge \Delta$ with two bare
smooth boundaries, the generators are given exactly by the rows of
matrices ${\sf M}_X$ and ${\sf M}_Z$ in Eqs.~(\ref{eq:banded-H1}) and
(\ref{eq:banded-H2}). As established in Section
\ref{sec:strip-cartesian}, if the polynomials have no common factors,
this strip supports $\kappa'$ independent infinite-weight horizontal
chains and an equal number of polynomial co-chains.

To isolate and study finite-weight $X$ co-chains localized strictly
next to an edge, we consider a strip with mixed boundary conditions: a
smooth lower edge and a rough upper edge. Indeed, shortening the $X$
generators at the top also removes any $X$ co-chains that cross the
strip top-to-bottom. Thus, any surviving non-trivial $X$ co-chains
must be localized strictly near the lower smooth edge.

The corresponding CSS matrices over $F[x]$ for this mixed-boundary
strip of width $m>\Delta$ are obtained by (simplified) shortening near
the top row and puncturing near the bottom row for ${\sf M}_X$, and
vice versa for ${\sf M}_Z$: {\small
  \begin{eqnarray} 
     \label{eq:banded-mix-M1} {\sf M}_X &=&\left( \def\ad{a_{\Delta}}
\def\bd{b_{\Delta}}
        \begin{array}[c]{cc|cc|ccc|cc|cc} a_0&b_0& \\ \cline{3-4}
a_1&b_1&a_0&b_0 \\ : &: &a_1&b_1 \\ \cline{6-7} \ad&\bd&: &:&
\;{\scriptscriptstyle{\boldsymbol\ddots}}\;\strut &a_0 &b_0 \\
\cline{1-2}\cline{8-9} & & \ad&\bd&\ldots &a_1 &b_1&a_0 &b_0 \\
\cline{3-4}\cline{10-11} & & & & \ldots & a_2 &b_2 &a_1 &b_1&a_0&b_0
        \end{array}\right),\quad\;\;\;\\
    \label{eq:banded-mix-M2} {\sf M}_Z^T &=& \left(
        \begin{array}[c]{cccccccccc} b_0 & \\ - a_0& \\ \cline{2-2}
b_1& b_0 \\ - a_1&- a_0&\\ {\scriptscriptstyle\boldsymbol\vdots}& b_1&
{\scriptscriptstyle\boldsymbol\ddots} \\ \cline{4-4}
{\scriptscriptstyle\boldsymbol\vdots}&- a_1 & & b_0\\ b_{\Delta} &
{\scriptscriptstyle\boldsymbol\vdots}& &- a_0\\\cline{5-5}
  
          - a_{\Delta}& {\scriptscriptstyle\boldsymbol\vdots}&
&b_1&b_0\\ \cline{1-1} & b_{\Delta} &&- a_1&- a_0&\\

                  & - a_{\Delta}&
&{\scriptscriptstyle\boldsymbol\vdots}&{\scriptscriptstyle\boldsymbol\vdots}&
\\ \cline{2-2} & & {\scriptscriptstyle\boldsymbol\ddots} &
{\scriptscriptstyle\boldsymbol\vdots}&
{\scriptscriptstyle\boldsymbol\vdots} &
{\scriptscriptstyle\boldsymbol\ddots} \\ \cline{7-7} & & &
b_{\Delta}&b_{\Delta-1}& \cdots& b_0\\ & & & - a_{\Delta}&-
a_{\Delta-1}& \cdots& - a_0
         \end{array}\right).
\end{eqnarray}} With any $m\ge 1$, both matrices ${\sf M}_X$ and ${\sf
M}_Z$ have exactly $m$ rows and $2m$ columns, and satisfy ${\sf
M}_X{\sf M}_Z^T=0$. The existence of localized edge modes is directly
governed by their SNF invariants, which tie back to the joint roots of
the polynomials:

\begin{statement}
  \label{th:mix-edge-strip} With $a_0\equiv a_0(x)\neq 0$ or
$b_0\equiv b_0(x)\neq 0$, matrices ${\sf M}_X$ and ${\sf M}_Z$ have
$m$ non-zero SNF invariants. All invariants are unit if and only if
the polynomials $a_0$ and $b_0$ are mutually prime, i.e.,
  \begin{equation}
    \label{eq:edge-condition-m1}
\gcd\biglb(a_{0}(x),b_{0}(x)\bigrb)=1.
  \end{equation}
\end{statement}

When this edge condition is met, the complex is exact near the
boundary. However, if the condition is violated, the shared roots of
$a$ and $b$ at $y=0$ manifest as non-trivial finite-weight $X$
co-chains localized near the bare lower smooth boundary and, by
duality, as $Z$ chains near the upper rough boundary.  As proven in
Appendix~\ref{app:root-edge-count-proof}, for $m\ge \Delta$, the
number of independent localized solutions exactly equals the number of
common roots at $y=0$.

To suppress these non-trivial edge operators and ``correct'' the
boundary, we must systematically modify the generators along the
edge. Consider first the simplest failure of the edge condition, where
only the elements in the first row of Eq.~(\ref{eq:banded-mix-M1})
have a non-unit common factor, 
\begin{equation}
h_0(x)\equiv\gcd \biglb(a_0(x),b_0(x)\bigrb),\label{eq:h0}
\end{equation}
but the elements in
the second (and
thus all subsequent rows) have no common factors,  
$$\gcd\biglb(a_0(x),b_0(x),a_1(x),b_1(x)\bigrb) = 1.$$
This scenario corresponds to $k_0 \equiv \deg h_0(x)$ roots of 
multiplicity one in $y$.  In this case, only the first boundary  row
requires modification.  Namely, we define the reduced boundary
polynomials as:
\begin{equation} \dot a_0(x)\equiv a_0(x)/h_0(x),\quad \dot
b_0(x)\equiv b_0(x)/h_0(x).\label{eq:edge-m1-ratios}
\end{equation}
Replacing $a_0, b_0$ with $\dot a_0, \dot b_0$ in the first row of
Eq.~(\ref{eq:banded-H1}) [and Eq.~(\ref{eq:banded-mix-M1})] removes
this common factor from SNF invariants.

If the GCD of the elements in the second row is also non-unit, this
 indicates a higher-order intersection between the two
curves $a=0$ and $b=0$ at $y=0$. Each additional row with a non-unit
GCD generates further localized solutions corresponding to these
higher-multiplicity roots. By systematically canceling out the collective GCDs
in the corresponding rows, we take care of these additional roots
exactly, row by row. Because the original polynomials $a$ and $b$ are
mutually prime (the bulk condition), the intersection multiplicity is
strictly bounded. Therefore, at most $\Delta$ boundary rows will ever
need to be modified.

A symmetric analysis applies to the upper smooth edge, governed by the
roots at $y=\infty$. Overall, to guarantee exact boundaries, the
original matrix ${\sf M}_X$ in Eq.~(\ref{eq:banded-H1}) should be
replaced by a modified matrix ${\sf M}_X'$ where the outermost
boundary rows are divided by their respective common polynomial
factors:
\begin{equation}
  \label{eq:banded-H1-prime} {\sf M}_X' =\left( \def\ad{a_{\Delta}}
\def\bd{b_{\Delta}}
    \begin{array}[c]{cc|cc|c|cc|cc} \dot a_0&\dot b_0& \\ \cline{3-4}
\ddot a_1&\ddot b_1&\ddot a_0&\ddot b_0^{\strut} \\ : &: &\dddot
a_1&\dddot b_1 ^{\strut}\\ \cline{6-7} \ad&\bd&: &:&
\;{\scriptscriptstyle{\boldsymbol\ddots}}\;\strut &a_0 &b_0 \\
\cline{1-2}\cline{8-9} & & \ad&\bd&&a_1 &b_1&a_0 &b_0 \\ \cline{3-4} &
& & & & : &: &a_1^{(\Delta)} &b_1^{(\Delta)} \\ & & & & _{\strut} &
a_\Delta''& b_\Delta'' & :&: \\ \cline{6-7} & & & & & & ^{\strut} &
a_\Delta'& b_\Delta'
    \end{array}\right).
\end{equation}
Here single and multiple dot/prime accents indicate that the elements
in the corresponding row have been divided by their GCD.

\begin{statement}
  \label{th:modified-exactness}
  The modified chain complex defined by the boundary operators
  ${\sf M}_X'$ and the unmodified ${\sf M}_Z$ satisfies
  ${\sf M}_X'({\sf M}_Z)^T=0$. Furthermore, this algebraic reduction
  explicitly eliminates the common roots of $a$ and $b$ at $y=0$ (and
  symmetrically at $y=\infty$), rendering the complex entirely devoid
  of finite-weight non-trivial co-chains localized at the smooth
  boundaries.
\end{statement}

Once the boundary operators associated with zero and infinite roots
are eliminated via the modified matrix ${\sf M}_X'$, the number of
remaining independent infinite chain and polynomial co-chain solutions
supported in a strip of width $m\ge \Delta$ is given strictly by the
number of non-zero, finite joint roots of the polynomials $a$ and
$b$. This matches exactly the total quantum dimension $k_{\rm TO}$ for
the original parent complex on the infinite plane, see
Eq.~(\ref{eq:BBS-k}) in Section~\ref{sec:covering-plane}.
Furthermore, as can be verified by looking at the corresponding
matrices with mixed boundary conditions
[cf.~Eqs.~(\ref{eq:banded-mix-M1}) and (\ref{eq:banded-mix-M2})],
fixing the boundaries fully eliminates any co-chains localized near a
single boundary---all remaining non-trivial co-chains have finite
weight and necessarily connect the two edges.

Finally, we note that in the case of an \emph{unbalanced}
boundary---where the $y$-degree of $b$ is strictly smaller than
$\Delta$---spurious localized solutions can arise if $a_\Delta(x)$ is
not a unit. The modified boundary matrix ${\sf M}_X'$ derived above
systematically resolves these spurious solutions as well. This exact
algebraic reduction naturally motivates a geometric prescription to
``balance'' the boundaries by removing some edge qubits, which we
formalize in the subsequent sections.

\subsection{Tilted strips}
\label{sec:tilted-strip}

The preceding analysis extends naturally to an infinite strip running in an arbitrary lattice direction, defined by an integer normal vector ${\bf m} = (m_x, m_y)^T$ with mutually prime components:
\begin{equation}
  R({\bf m},L)\equiv \{(i,j)\in \mathbb{Z}^2: 0\le m_x i+m_y j< L\}.\label{eq:m-strip}
\end{equation}
Regarding the lattice as a free abelian group, we can always apply an
invertible, unimodular coordinate transformation to $\mathbb{Z}^2$
such that one coordinate axis runs parallel to $R$. This maps the
tilted strip to a standard horizontal strip of width $L$.

Under this transformation, the original polynomials
$a,b\in F[x^{\pm1},y^{\pm1}]$ map to
$\underline a,\underline b\in F[\xi^{\pm1},\eta^{\pm1}]$, where
$\eta=x^{m_x}y^{m_y}$ and the corresponding exponents for $\xi$ are
determined by the B\'ezout coefficients of $m_x$ and $m_y$. The
analysis of the previous subsection then applies verbatim. In
particular, the smooth boundary conditions precluding localized edge
modes (Eqs.~\ref{eq:edge-condition-m1} and \ref{eq:SNF-Mx}) become:
\begin{equation}
  \label{eq:edge-condition-transformed}
  \gcd\biglb(\underline a_0(\xi),\underline b_0(\xi)\bigrb)=1,\quad 
  \gcd\biglb(\underline a_\Delta(\xi),\underline b_\Delta(\xi)\bigrb)=1.
\end{equation}
Because the coordinate transformation changes the polynomials, a
tilted boundary may or may not require modified edge generators
regardless of whether the original Cartesian edge conditions were
satisfied.  This corresponds to the well known fact that only the
number of finite non-zero roots is conserved by the invertible
monomial transformations of the polynomials; roots at zero or infinity
may come and go.

\subsection{A strip with two rough edges}

A horizontal strip bounded by two rough edges is the CSS-dual of a
strip with two smooth edges. Mathematically, this duality corresponds
to interchanging the matrices up to a sign and using the reciprocal
polynomials: ${\sf M}_X'(a,b)={\sf M}_Z(\widetilde b,-\widetilde a)$
and ${\sf M}_Z'(a,b)={\sf M}_X(\widetilde b,-\widetilde a)$.

Consequently, when the polynomials $a,b$ satisfy the bulk condition
(Lemma~\ref{th:bulk}), a strip with two rough edges supports no
infinite-weight chains and no finite-weight co-chains. Furthermore, if
both edge conditions are satisfied (or the edges are properly modified
as in Eq.~\ref{eq:banded-H1-prime}), there are no localized chain
solutions near either rough boundary. Thus, any non-trivial polynomial
chain solution must physically connect the two rough boundaries.

We have already seen that the same statement holds for co-chain
solutions in a strip with two parallel smooth boundaries (modified if
necessary): any polynomial co-chain must connect the two smooth
boundaries. These properties precisely mirror the string operators of
conventional surface codes.

\subsection{Multiplicity of the solutions and balanced boundaries}

Let us now count the independent solutions in a horizontal infinite
strip of width $m$ with two smooth (or two rough) edges, assuming the
bulk condition (Lemma~\ref{th:bulk}) is met. We temporarily assume
that both bare edge conditions in Eq.~(\ref{eq:SNF-Mx}) are also
satisfied.

On a horizontal strip with two smooth edges, the number of independent
infinite chains (and dual non-trivial co-chains connecting the
boundaries), denoted $\kappa'$, equals the degree spread of the
determinant of ${\sf M}_X$ [Eq.~(\ref{eq:det-Mx})]. If
$\deg_y a = \deg_y b = \Delta$, then both $a_\Delta$ and $b_\Delta$
are non-zero, and $\kappa'=\Delta_{\rm deg} \res_y(a,b)$. The same
result holds when $\Delta = \deg_y a > \deg_y b > 0$, as the edge
condition implies $a_\Delta$ must be unit, rendering the additional
determinant factors trivial. Finally, if $b$ is strictly univariate,
$b(x,y)=b_0(x)$, we find:
\begin{equation}
  \label{eq:univariate-b}
  \kappa'=\deg_y (a)\,\deg_x(b_0), 
\end{equation}
as detailed in Appendix \ref{app:univariate}. (More generally, these
raw degrees are replaced by their respective degree spreads).

When the polynomials have unequal degrees, e.g.,
$\Delta=\deg_y a > \deg_y b$, a direct vertical shift using the
equivalent polynomial $b'=y^{\Delta-\deg_y b}b$ forces
$b'_\Delta(x)\neq0$ but creates a zero at the lower boundary,
$b_0'(x)=0$. This generates zero columns in the CSS matrices and
spurious localized solutions. To systematically avoid these spurious
edge modes and recover the underlying sublattice translation symmetry
of the BB codes [Eq.~(\ref{eq:translation})], we must strictly
\emph{balance} the boundaries by removing some edges and associated qubits.

Explicitly, let $\deg_{y, \rm min}(a)$ and $\deg_{y,\rm min}(b)$ be
the minimum $y$-degrees of the polynomials, with the difference:
\begin{equation}
  \delta_y^{({\rm min})}\equiv \deg_{y, \rm min}(a)-\deg_{y,\rm min}(b).\label{eq:delta-y-min}
\end{equation}
If $\delta_y^{({\rm min})}>0$, balancing the lower smooth boundary
requires removing exactly $\delta_y^{({\rm min})}$ rows of horizontal
edges. Conversely, if $\delta_y^{({\rm min})}<0$, we must remove
$|\delta_y^{({\rm min})}|$ rows of vertical edges.  Similarly, defining the
maximum degree difference:
\begin{equation}
\delta^{({\rm max})}_y\equiv \deg_{y, \rm max}(a)-\deg_{y,\rm max}(b),
\label{eq:delta-y-max}
\end{equation}
we remove $\delta^{({\rm max})}_y$ rows of horizontal edges at the
upper boundary if positive, or $|\delta^{({\rm max})}_y|$ rows of
vertical edges if negative. Balancing ensures that every qubit is
supported by both an $X$ and a $Z$ stabilizer generator.

\begin{example}[Conventional surface code]
  \label{ex:surface-code-bnd}
  For a surface code with $a=1+x$ and $b=1+y$, both polynomials have
  non-zero constant terms ($\delta_y^{\rm(min)}=0$), while their
  maximum $y$-degrees yield $\delta_y^{\rm(max)}=-1$. An upper
  horizontal smooth boundary therefore requires removing one row of
  vertical edges, reproducing the conventional smooth surface code
  boundary (Fig.~\ref{fig:surf-lattice}).
\end{example}

\begin{figure}[htbp]
  \centering
  \includegraphics[width=0.97\columnwidth]{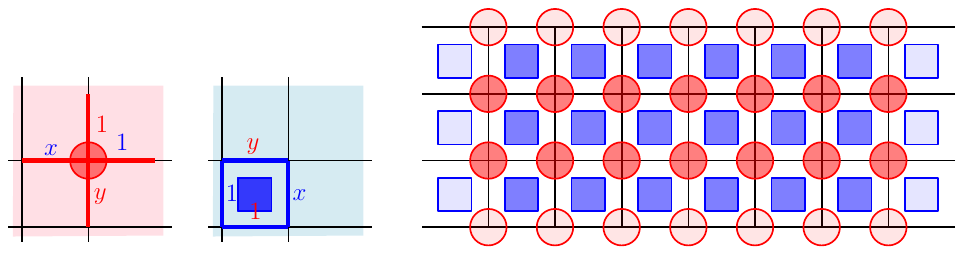}
  \caption{Structure of \emph{balanced} smooth (horizontal) and rough
    (vertical) boundaries for a surface code in Example
    \ref{ex:surface-code-bnd}.  Red and blue tiles on the left show
    stabilizer generators in the bulk.  On the right, thin black lines
    show qubit locations, placement of stabilizer generators is shown
    with blue squares and red circles, respectively, and lighter
    shading indicates that corresponding generators must be trimmed at
    the boundary.  Notice that the top row of vertical edges and the
    leftmost row of horizontal edges are removed to form balanced
    boundaries.}
  \label{fig:surf-lattice}
\end{figure}

With balanced boundaries, the horizontal edge conditions take an
invariant form:
\begin{equation}
  \label{eq:horizontal-edge-condition-inv}
  \gcd\biglb(a_{\rm min}(x),b_{\rm min}(x)\bigrb)=
  \gcd\biglb(a_{\rm max}(x),b_{\rm max}(x)\bigrb)=1.
\end{equation}
Provided the strip is wide enough to support non-trivial operators,
the number of solutions in such a balanced strip becomes independent of
$m$:\footnote{For consistency, when calculating resultants,
  we always reduce polynomials to a form with non-negative degrees and
  non-zero free terms, which guarantees the correct code dimension.}
\begin{equation}
  \label{eq:kappa-horiz}
  \kappa_{\rm horiz}=\Delta_{\rm deg} \res_y(a,b).
\end{equation}
The invariant edge conditions (\ref{eq:horizontal-edge-condition-inv})
guarantee the absence of roots at $y=0$ or $y=\infty$. Thus,
$\kappa_{\rm horiz}$ precisely counts the finite, non-zero roots. In
the case the edge conditions are violated, we may get solutions
localized at the edges, but these can be systematically suppressed
using the modified boundary matrix ${\sf M}_X'$ derived in Section
\ref{sec:strip-smooth-rough}; see Example \ref{ex:factor-min}.

\begin{example}
  \label{ex:factor-min}
  Consider binary polynomials $a=1+x$ and $b=1+x^2+y$. These polynomials have the same
  $y$-degree difference values as those for the surface code, yielding
  the identical structure of horizontal boundaries.  However, they
  violate the invariant edge condition
  (\ref{eq:horizontal-edge-condition-inv}), sharing the factor
  $\gcd(a_{\rm min},b_{\rm min})=1+x$. This generates infinitely many
  localized finite-weight $Z$-chains at the lower boundary
  (Fig.~\ref{fig:factor-min}), necessitating modified stabilizer
  generators.
\end{example}

\begin{figure}[htbp]
  \centering
  \includegraphics[width=0.9\columnwidth]{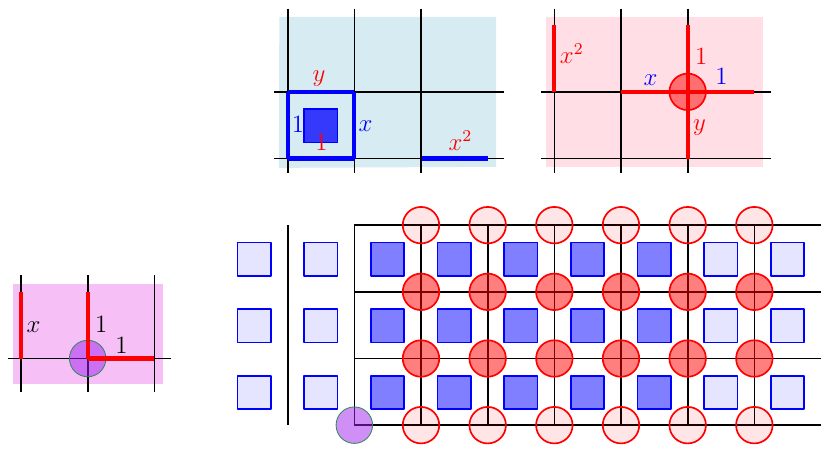}
  \caption{As in Fig.~\ref{fig:surf-lattice} but for the BBS code from
    polynomials {\color{blue}$a=1+x$} and {\color{red}$b=1+x^2+y$}
    with horizontal smooth and vertical rough boundaries.  Additional
    tile with violet shading corresponds to a non-trivial $X$ codeword
    $(u_0,v_0)=(x,1+x)$ of weight $3$ associated with the lower
    boundary.  The boundary can be fixed by adding this codeword to
    the stabilizer group in just one location as indicated with a
    purple circle, or replacing the entire bottom row of
    lightly-shaded vertex operators, which gives the same code.}
  \label{fig:factor-min}
\end{figure}

Exactly similar analysis applies to vertical strips, except now we
should expand the polynomials over $x$
with coefficients in $F[y]$.  The vertical invariant edge conditions
are:
\begin{equation}
  \label{eq:edge-vert}
  \gcd\biglb(a_{\rm min}(y),b_{\rm min}(y)\bigrb)=
  \gcd\biglb(a_{\rm max}(y),b_{\rm max}(y)\bigrb)=1,
\end{equation}
yielding a solution count $\kappa_{\rm vert}=\Delta_{\rm deg} \res_x(a,b)$. This allows us to unify the horizontal and vertical analysis:

\begin{statement}
  \label{th:horiz-vert}
  Let $a,b\in F[x^{\pm1},y^{\pm1}]$ be a pair of mutually prime
  bivariate Laurent polynomials satisfying the four invariant edge
  conditions in Eqs.~(\ref{eq:horizontal-edge-condition-inv}) and
  (\ref{eq:edge-vert}). The number of independent solutions in
  sufficiently wide, balanced horizontal and vertical strips
  coincides:
  \begin{equation}
    \label{eq:horiz-vert}
    \kappa=\Delta_{\rm deg}\res_x(a,b)=\Delta_{\rm deg}\res_y(a,b).
  \end{equation}
\end{statement}
\begin{proof}
  The bulk condition ensures the resultants are
  non-zero. Equations~(\ref{eq:kappa-horiz}) and (\ref{eq:edge-vert})
  give the counts for horizontal and vertical strips with smooth
  boundaries, respectively.  The corresponding numbers for strips with
  rough boundaries are the same, due to CSS duality.  We must simply
  show $\kappa_{\rm horiz}=\kappa_{\rm vert}$.

  The zeros of $\res_y(a,b)$ are the finite $x$-coordinates of the
  intersections between $a(x,y)=0$ and $b(x,y)=0$, while zeros of
  $\res_x(a,b)$ are the finite $y$-coordinates. The edge conditions
  strictly forbid intersections at infinity or along the axes. Thus,
  both degree spreads independently count the exact same set of
  finite, non-zero intersection points, rendering the projection
  direction irrelevant.
\end{proof}

Using the construction in Sec.~\ref{sec:tilted-strip}, any
sufficiently wide tilted strip (with properly fixed boundaries, if
necessary) carries the same number of solutions.  Consequently,
$\kappa$ is a topological invariant of the integer plane
$\mathbb{Z}^2$, equal to the total number of common finite roots with
both coordinates strictly non-zero. As shown in
Appendix~\ref{app:root-count}, this matches the total quantum
dimension $k_{\rm TO}$ associated with the TO on the infinite plane
[Eq.~(\ref{eq:BBS-k})].

\subsection{Rectangular region with two smooth and two rough
boundaries}

Having established the exact properties of linear boundaries, we can
construct a finite-size code block by intersecting two orthogonal
strips. Assuming the invariant edge conditions
(\ref{eq:horizontal-edge-condition-inv}) and (\ref{eq:edge-vert})
hold, we start with an infinite horizontal strip with balanced smooth
boundaries. We then introduce vertical rough boundaries by puncturing
the $Z$ stabilizer generators (and any infinite $Z$-chains) and
shortening the $X$ generators.

We define the physical region dimensions, $L_x$ and $L_y$, strictly in terms of horizontal edges: $L_y$ is the number of rows of horizontal edges in the original horizontal strip, and $L_x$ is the number of columns of horizontal edges in the orthogonal vertical strip. For the resulting CSS code to be non-trivial, at least one column of $X$ generators and one row of $Z$ generators must survive. The minimum required dimensions are dictated by the polynomial degree spreads:
\begin{eqnarray}
\Delta_x&\equiv& \max\biglb(\Delta_{\rm deg}^{(x)}(a), \Delta_{\rm deg}^{(x)}(b)\bigrb),
         \label{eq:min-Lx} \\
 \Delta_y&\equiv& \max\biglb(\Delta_{\rm deg}^{(y)}(a), \Delta_{\rm deg}^{(y)}(b)\bigrb).\label{eq:min-Ly}
\end{eqnarray}
As proven in Appendix \ref{app:th:code-rectang}, this construction yields our primary result:

\begin{statement}[Rectangular BBS code]
  \label{th:code-rectang}
  Under the conditions of Statement \ref{th:horiz-vert}, consider a
  rectangular region carved from a horizontal smooth strip by adding
  two vertical rough boundaries, with all edges properly balanced. For
  sufficiently large dimensions $L_x\ge \Delta_x$ and
  $L_y\ge \Delta_y$, the dimension of the resulting CSS code is
  exactly $\kappa$ [Eq.~(\ref{eq:horiz-vert})], and its minimal
  distances satisfy:
  $$d_X\ge  L_y/\Delta_y, \quad d_Z\ge L_x/\Delta_x.$$
\end{statement}

While the order of truncation (smooth first vs. rough first) dictates the precise microscopic layout of the code, neither the dimension $k$ nor the distance bounds depend on this choice. Furthermore, rotating the boundary orientations (vertical smooth and horizontal rough) preserves these parameters entirely. 

We should note that while the invariant edge conditions greatly
simplify the construction, they are not strictly mandatory. If
violated, boundaries can be algebraically modified via
Eq.~(\ref{eq:banded-H1-prime}). However, uncorrected violations along
a rough edge can produce shortened $X$ generators with unacceptably
high weights, as demonstrated in Example \ref{ex:edge-off} in
Sec.~\ref{sec:LEC-cmp}.

\section{Guaranteed BBS code construction}
\label{sec:construction}

Statement \ref{th:code-rectang} gives a practical algorithm for
constructing BBS codes. In the following, we consider the slightly more
general case of a construction based on the intersection of two tilted
strips [see Sec.~\ref{sec:tilted-strip}], $R_1\equiv R({\bf m}_1,L_1)$
with smooth and $R_2\equiv R({\bf m}_2,L_2)$ with rough boundaries,
with non-collinear normal vectors,
${\bf m}_1\times {\bf m}_2\neq {\bf 0}$. As always, we assume that
the boundaries are properly balanced for the polynomials
$a,b\in F[x^{\pm1},y^{\pm1}]$ which satisfy the bulk condition in
Lemma~\ref{th:bulk} and invariant versions of edge
conditions~(\ref{eq:edge-condition-transformed}) using the coordinates
appropriate for each strip,
\begin{equation}
  \label{eq:edge-condition-transformed-inv}
  \gcd\biglb(\underline a_{\rm min}(\xi),\underline b_{\rm min}(\xi)\bigrb)=
  \gcd\biglb(\underline a_{\rm max}(\xi),\underline b_{\rm max}(\xi)\bigrb)=1.
\end{equation}

\begin{algorithm}[{Guaranteed BBS construction}]\strut
  \label{alg:guaranteed}
\begin{enumerate}
\item Construct infinite groups ${\cal S}^X_1$ (punctured to the
  region $R_1$) and ${\cal S}^Z_1$ (shortened to $R_1$), respectively,
  using vertex generators of the parent infinite-plane complex with at
  least one edge in $R_1$, and plaquette generators with \emph{all}
  edges in $R_1$ (simplified shortening).
\item Construct groups ${\cal S}^X_2$ and ${\cal S}^Z_2$,
  respectively, by shortening ${\cal S}^X_1$ and puncturing
  ${\cal S}^Z_1$ to the intersection of the two strips, $R_1\cap R_2$.
\item Puncture out any qubits outside of the intersection of supports
  of the groups ${\cal S}^X_2$ and ${\cal S}^Z_2$, which gives the $X$
  and $Z$ stabilizer subgroups of the resulting BBS code.
\end{enumerate}
\end{algorithm}

Balanced boundaries for each region can be constructed using the
associated coordinates introduced in Sec.~\ref{sec:tilted-strip}.
Namely, the region $R_1$ can be defined using the coordinates $\xi'$, $\eta'$
associated with the normal vector ${\bf m}_1$. We include
horizontal edges $(\xi_h',\eta_h')$ and vertical edges
$(\xi_v',\eta_v')$ such that:
\begin{eqnarray}
  0 &\le& \eta_h'<L_1,\label{eq:S1h} \\
  \delta_{\eta'}^{\rm (min)}&\le& \eta_v' <L_1+\delta_{\eta'}^{\rm (max)},\label{eq:S1v}
\end{eqnarray}
where $\delta_{\eta'}^{\rm (min)}$ and $\delta_{\eta'}^{\rm (max)}$,
respectively, are the $\eta'$-degree differences for the transformed
polynomials, see Eqs.~(\ref{eq:delta-y-min}) and
(\ref{eq:delta-y-max}). The corresponding bounds for the region
$R_2$, in terms of the associated coordinates $\xi''$, $\eta''$:
\begin{eqnarray}
  0&\le& \eta''_h<L_2,\label{eq:S2h}\\
  \delta_{\eta''}^{\rm (max)}&\le& \eta''_v <L_2+\delta_{\eta''}^{\rm (min)}.\label{eq:S2v}
\end{eqnarray}
In practice, it may be more convenient to define the regions by
including all horizontal edges with untransformed coordinates $(i,j)$
such that
\begin{align}
R_1: &\qquad        0\le m_{1x}i+m_{1y}j<L_1,\tag{$\ref{eq:S1h}'$}\\
R_2: &\qquad        0\le m_{2x}i+m_{2y}j<L_2,\tag{$\ref{eq:S2h}'$}
\end{align}
and adding all vertical edges in the union of the supports of all
plaquette generators for $R_1$ and all vertex generators for $R_2$
(in both cases, all horizontal edges must fit).

Also, when constructing the group ${\cal S}_2^Z$, Steps $1$
and $2$ can be combined; we only need to include (partially or
completely) the bulk generators with plaquette positions
such that
\begin{eqnarray*}
  -\deg_{\eta'}^{\rm min}(b')&\le &\eta'_\square < L_1-\deg_{\eta'}^{\rm max}(b'),\\
  -\deg_{\eta''}^{\rm max}(b'')&\le &\eta''_\square < L_2-\deg_{\eta''}^{\rm min}(b''),
\end{eqnarray*}
where $b'$ and $b''$ are obtained from the original polynomial $b$ by
the corresponding coordinate transformations. Similarly, while
constructing the group ${\cal S}_2^X$, at Step 1 it is practical to
choose a sufficiently large buffer size $L_{\rm buf}$, and include
only the bulk generators with vertex positions
such that
\begin{eqnarray*}
  \deg_{\eta'}^{\rm max}(a')&\le &\eta'_\bullet < L_1+\deg_{\eta'}^{\rm min}(a'),\\
  -L_{\rm buf} +\deg_{\eta''}^{\rm max}(a'')&\le &\eta''_\bullet < L_2-\deg_{\eta''}^{\rm min}(a'')+L_{\rm buf}.
\end{eqnarray*}
To ensure correct code construction, it is important that we include
only the uncut bulk generators. This ensures that shortening at Step
2 does not produce, e.g., an $X$ generator corresponding to a vertical
co-chain.

In the rest of this section we compare Algorithm \ref{alg:guaranteed}
with related constructions found in the literature.

\subsection{Application to hypergraph-product codes}
Here we consider HP codes on the infinite plane, which most generally
correspond to the case of two \emph{aligned} polynomials, see
Sec.~\ref{sec:one-sublattice}. In the simplest case the polynomials
depend on one variable each, $a(x,y)=f_1(x)$ and $b(x,y)=f_2(y)$. It
is easy to verify that such polynomial pairs have no common factors
and always satisfy the edge conditions. Without limiting generality,
let us assume both polynomials have non-zero free terms,
$f_j(0)\neq 0$, and denote their degrees $k_j\equiv \deg f_j(x)$,
$j=1,2$. The common roots of $a$ and $b$ are simply all possible root
pairs $(x_i,y_j)$, with $f_1(x_i)=0$, $f_2(y_j)=0$, which gives
$\kappa=k_1k_2$ for the dimension~(\ref{eq:BBS-k}). Equivalently,
the two resultants in this case are $\res_y(a,b)=[f_1(x)]^{k_2}$ and
$\res_x(a,b)=[f_2(y)]^{k_1}$, so that Eq.~(\ref{eq:horiz-vert}) also
gives $\kappa=k_1k_2$, in agreement with Eq.~(\ref{eq:univariate-b}).

First, consider the rectangular case, an intersection of a horizontal
strip $R_1$ of width $L_1\equiv L_y$ with smooth edges and a vertical
strip $R_2$ of width $L_2\equiv L_x$ with rough edges. The minimum
degree differences (\ref{eq:delta-y-min}) are both trivial,
$\delta_x^{\rm (min)}=\delta_y^{\rm (min)}=0$, while the maximum ones
in Eq.~(\ref{eq:delta-y-max}) have opposite signs,
$\delta_x^{\rm (max)}=k_1$, $\delta_y^{\rm (max)}=-k_2$. It is easy
to verify that Algorithm \ref{alg:guaranteed} uses $L_x \times L_y$
horizontal edges, $(L_x-k_2)\times (L_y-k_1)$ vertical edges,
$(L_x-k_2)\times L_y$ independent $X$-generators (vertices), and
$L_x \times (L_y-k_1)$ independent $Z$-generators (plaquettes), which
gives a CSS code of dimension $k=k_1k_2$.

These parameters and the structure of stabilizer generators are
exactly consistent with those of an HP code with CSS matrices
\begin{equation}
\label{eq:HPHxHz}
\!\! H_{X}=\left(I_{r_1}\otimes H_{2},  H_{1}\otimes I_{r_{2}}\right)\!,\;
 H_{Z}^T=\left(
   \begin{array}[c]{c}
     H_{1}\otimes I_{n_{2}}\\
     -I_{n_{1}}\otimes H_{2} 
   \end{array}
 \right),
\end{equation}
constructed from a full-row-rank matrix $H_1$ with $r_1=L_y-k_1$ rows
and $n_1=L_y$ columns, and a full-column rank $H_2$ with $r_2=L_x$
rows and $n_2= L_x-k_2$ columns, where the rows of $H_1$ and $H_2$,
respectively, match the coefficients of the polynomials $x^{j}f_1(x)$
and $x^{j}f_2(x)$. In particular, with $f_1=f_2=1+x$, conventional
(unrotated) surface codes are recovered.

For the diagonal region boundaries, e.g., ${\bf m}_1=(1,1)$ and
${\bf m}_2=(-1,1)$, we use the coordinates $\xi'=\xi''=x$ and
$\eta'=xy$, $\eta''=y/x$, that is, $y=\eta'/\xi'=\eta''\xi''$. In
particular, for equal-degree polynomials, the corresponding codes
contain the same number of qudits as the rotated surface codes, with
the total of $n=L_1L_2$ horizontal and vertical edges, see
Fig.~\ref{fig:rotated-HP} for an example.

\begin{figure}[htbp]
  \centering
  \includegraphics[width=0.95\columnwidth]{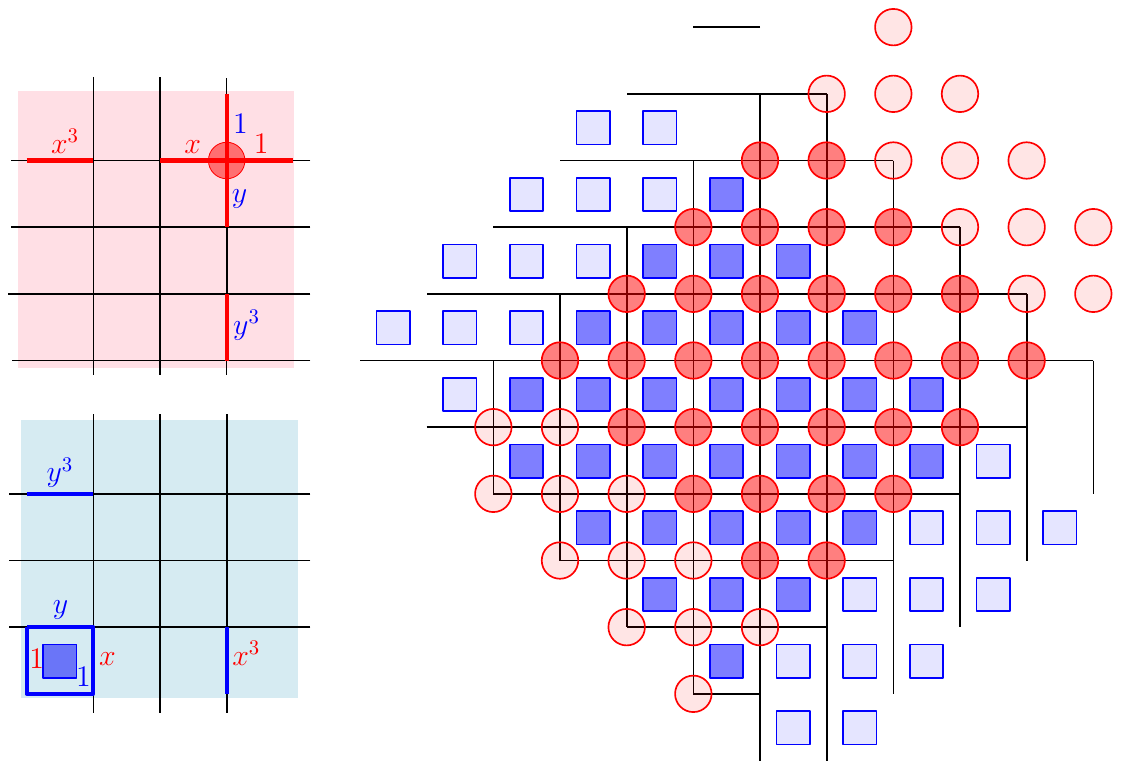}
  \caption{As in Fig.~\ref{fig:surf-lattice} but for the code
    $[[11^2,9,5]]$ produced by Algorithm \ref{alg:guaranteed} for
    equal-degree aligned polynomials
    ${\color{blue}a(x,y)}={\color{red}b(y,x)}=1+x+x^3$ with diagonal
    boundaries and $L=11$.}
\label{fig:rotated-HP}
\end{figure}

\subsection{Comparison with tile codes}

While the construction algorithm may seem different, unrotated tile
codes \cite{Steffan-etal-Eberhardt-2025,Breuckmann-etal-Steffan-2025}
are a special case of rectangular BBS codes. Indeed, a pair of
commuting red ($X$) and blue ($Z$) bulk tiles of size
$(D+1)\times(D+1)$ are readily interpreted in terms of the support of
$X$- and $Z$-generators of a BB code on the infinite plane. Here the
$x$ and $y$ degrees of individual monomials must fit the range from
$0$ to $D$, for each polynomial necessarily including the end points
of the interval.

There are several additional requirements for \emph{valid} tiles.
First of these, Eq.~(3) in
Ref.~\onlinecite{Breuckmann-etal-Steffan-2025}, requires that for
$(D+1)\times (D+1)$ tiles, each of the four corner monomials
$\{1,x^D,y^D, x^Dy^D\}$ is contained in one of the two polynomials.
Second is the condition for topological order which corresponds to the
bulk condition in Lemma \ref{th:bulk}, the necessary and sufficient
condition for the absence of non-trivial finite-weight logical
operators on $\mathbb{Z}^2$. Similarly, we interpret the requirement
of ``total topological order'' formulated implicitly in
Ref.~\onlinecite{Breuckmann-etal-Steffan-2025} as a variant of our
edge conditions (\ref{eq:SNF-Mx}).

In the construction of (rectangular) tile codes, qubits occupy a
rectangular region with equal numbers of horizontal and vertical
edges, full or partial $Z$ checks are restricted to a horizontal
strip, and full or partial $X$ checks to a vertical strip. More
precisely, the tile placement algorithm is symmetric with respect to
$X$ and $Z$ generators: full $Z$ tiles restricted to the horizontal
strip are subsequently punctured to the intersection with the vertical
strip, while full $X$ tiles restricted to the vertical strip are
punctured to the intersection with the horizontal strip. This ensures
commutativity, and away from the corners gives stabilizer generators
identical to those from Algorithm~\ref{alg:guaranteed}. A notable
advantage of tile codes is that for any pair of \emph{valid} tiles
the construction is guaranteed to give code families with the
dimension $k=2D^2$ (maximum possible for polynomials of $x$ and $y$
degrees $D$) and distances increasing with the size of the region,
without the need for any additional ``corner'' corrections.

We have examined all tile codes listed in
Ref.~\cite{Steffan-etal-Eberhardt-2025}. The corresponding
polynomials satisfy the conditions on the monomials, the bulk
condition in Lemma \ref{th:bulk}, and the edge conditions
(\ref{eq:horizontal-edge-condition-inv}), (\ref{eq:edge-vert}).
Further, the degree requirements for valid tiles ensure that balanced
boundaries are obtained without the need for removing any qubits near
the edges. We have also verified that for all polynomial pairs with
weights $w\le 4$ of equal degrees $D\le 4$ that satisfy the conditions
for valid tiles (including but not limited to the tiles listed in
Ref.~\cite{Steffan-etal-Eberhardt-2025}), the parameters of the codes 
from the tile construction agree exactly with those produced by 
Algorithm \ref{alg:guaranteed}, without the need for any additional 
corner correction.

However, for a polynomial pair that violates one of the conditions for
valid tiles, the tile construction as presented in
Ref.~\onlinecite{Steffan-etal-Eberhardt-2025} may produce a code
family with distance bounded by small-weight logical operators. As
illustrated in Example \ref{ex:tile-corner} and
Fig.~\ref{fig:tile-corner}, the tile construction in such a case may
be amended, e.g., by promoting one or a few small-weight codewords to
stabilizer generators (usually in the corners), similarly to the LEC
algorithm in Ref.~\onlinecite{Liang-Eberhardt-Chen-2025}.

\begin{figure}[htbp]
  \centering
  \includegraphics[width=3in]{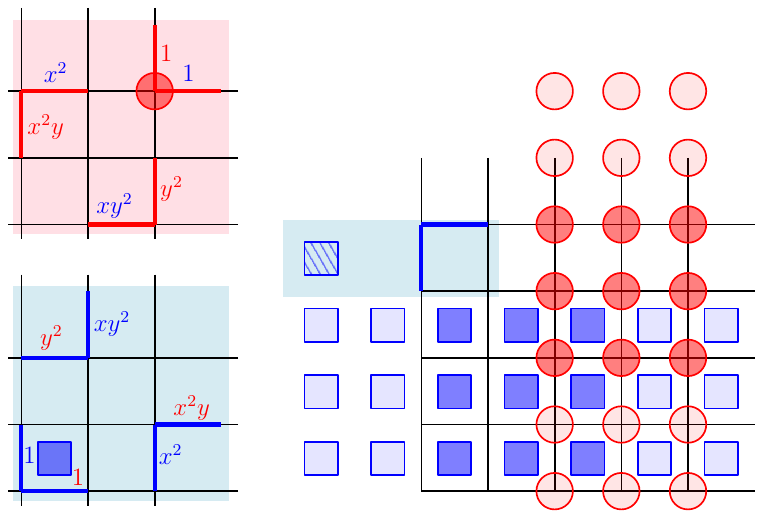}
  \caption{As in Fig.~\ref{fig:surf-lattice} but for the polynomials
    {\color{blue}$a=1+x^2+xy^2$}, {\color{red}$b=1+y^2+x^2y$} in a
    rectangular region with $n=50$ qubits ($5\times 5$ grid), see
    Example \ref{ex:tile-corner}. The tile construction gives
    $r_X=21$ (red) and $r_Z=21$ (blue) stabilizer generators shown as
    blue squares and red circles, respectively, which gives a code
    with $k=8$ and $d_Z=2$, with a minimum-weight $Z$ codeword shown
    with blue edges. The code can be amended by promoting this
    codeword to a $Z$ generator (drawn schematically as a hatched blue
    square in the corner), resulting in a $[[50,7,3]]$ code, same as
    directly given by Algorithm \ref{alg:guaranteed} with $L_x=L_y=5$.
    Increasing the code size $L_x=L_y$ gives codes with the same
    dimension and increasing distances, e.g., $[[72,7,4]]$,
    $[[98,7,5]]$, $[[128,7,6]]$, etc.}
  \label{fig:tile-corner}
\end{figure}

\begin{example}
  \label{ex:tile-corner}
  Consider binary polynomials $a=1+x^2+xy^2$, $b=1+y^2+x^2y$ of equal $x$
  and $y$ degrees $D=2$; the corresponding $3\times 3 $ tiles are
  shown in Fig.~\ref{fig:tile-corner}. This is an ``invalid'' tile pair
  since the monomial $x^2y^2$ is missing. Indeed, the tile
  construction\cite{Steffan-etal-Eberhardt-2025} gives a family of
  codes with distance $d=2$. Specifically, on a $5\times 5$ grid
  ($n=50$ qubits), the tile construction gives $r_X=r_Z=21$
  independent stabilizer generators of each type, which corresponds to
  a $k=8$ code, in agreement with the formula $k=2D^2$ for $D=2$ tile
  codes. In contrast, for these polynomials Eq.~(\ref{eq:horiz-vert})
  gives $\kappa=7$, which is also the dimension of the code produced
  by Algorithm \ref{alg:guaranteed}. The tile construction can be
  amended by adding an extra $Z$ generator, which reduces the code
  dimension to $k=7$.
\end{example}

We should also note that the condition for both polynomials to have
the same $x$ and $y$ degree spread equal to $D$ is necessary for the
tile code construction to work directly as stated, and to have the
code dimension $k=2D^2$ (Theorem 4 in
Ref.~\onlinecite{Breuckmann-etal-Steffan-2025}).  Without it, as
illustrated in Example \ref{ex:special-case} and
Fig.~\ref{fig:special-case}, this formula gives only an upper bound
for $k$.

\begin{figure}[htbp]
  \centering
  \includegraphics[width=3.0in]{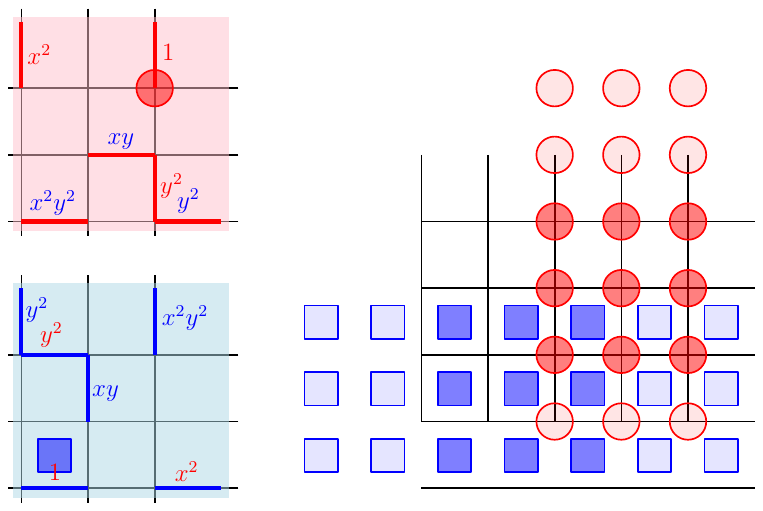}
  \caption{As in Fig.~\ref{fig:tile-corner} but for polynomials in
    Example \ref{ex:special-case}. Here no $Z$ generators are
    supported on the lowest row of vertical edges; these edges have to
    be removed along with the lowest row of $X$ generators, leaving
    $n=45$ qubits with $r_X=18$ and $r_Z=21$ independent generators,
    resulting in the code $[[45,6,3]]$. Same code is produced by
    Algorithm~\ref{alg:guaranteed} directly.}
  \label{fig:special-case}
\end{figure}

\begin{example}
  \label{ex:special-case}
  Consider binary polynomials {\color{blue}$a=y^2+xy+x^2y^2$} and
  {\color{red}$b=1+x^2+y^2$}, with the corresponding $3\times 3$ $X$
  and $Z$ tiles ($D=2$) as shown in Fig.~\ref{fig:special-case}.
  While all other conditions for valid tiles are satisfied, and the
  tile algorithm produces a code family with unbounded distances, the
  polynomial $a$ has the minimum $y$-degree $\deg_{y, \rm min}(a)=1$.
  This polynomial pair gives a code family with $k=6$, smaller than
  expected for $D=2$ tiles. Reduced dimension is due to the fact none
  of the blue tiles are supported on the vertical edges next to the
  lower boundary. As a result, for the example in
  Fig.~\ref{fig:special-case}, five qubits and three $X$ generators
  must be dropped, which gives a code with parameters $[[45,6,3]]$.
  Both Eq.~(\ref{eq:edge-vert}) and an explicit calculation with
  Algorithm \ref{alg:guaranteed} give a family of codes with dimension
  $k=\kappa=6$.
\end{example}

We conclude that our Algorithm \ref{alg:guaranteed} is strictly a
generalization of the rectangular tile
construction\cite{Steffan-etal-Eberhardt-2025,Breuckmann-etal-Steffan-2025}
(a variant of tile construction with diagonal boundaries was mentioned
in Ref.~\onlinecite{Steffan-etal-Eberhardt-2025} but was not
elaborated upon; it did not result in any codes with competitive
parameters).

\subsection{Comparison with ``open boundary'' codes \cite{Liang-Eberhardt-Chen-2025}}
\label{sec:LEC-cmp}
We have also examined the construction algorithm and corresponding
codes introduced by Liang, Eberhardt, and
Chen\cite{Liang-Eberhardt-Chen-2025}. For brevity, we refer to their
algorithm (excluding the final ``lattice grafting'' step, discussed
separately) and the resulting codes as the LEC algorithm and LEC
codes, respectively.

In our language, the LEC algorithm for constructing a family of planar
codes with open boundaries can be formulated as follows:
\begin{algorithm}[LEC algorithm]\label{alg:LEC}
  Given a pair of polynomials $a, b\in F[x^{\pm1},y^{\pm1}]$ that
  satisfy the topological order (TO) condition equivalent to that in
  Lemma \ref{th:bulk},
  \begin{enumerate}
  \item Construct ``bulk'' stabilizer generators on the infinite
    plane, see Fig.~\ref{fig:planar}
  \item
    \label{step:edge} Construct additional ``boundary'' $Z$ generators along
    vertical boundaries and $X$ generators along horizontal boundaries
    in a translation-invariant fashion, to ensure that any non-trivial
    $Z$-chain can terminate on a vertical boundary, and a non-trivial
    $X$-chain on a horizontal boundary.
  \item For a code with given
    horizontal and vertical dimensions, include the bulk generators
    inside the region and only those boundary generators that commute
    with each other (for any pair of non-commuting boundary
    generators, exclude both).
  \item
    \label{step:corner} If needed, complete the stabilizer group by adding ``corner''
    generators to get rid of local non-trivial logical operators. Do
    this by promoting $X$ or $Z$ logical operators to $X$ or $Z$
    stabilizer generators, respectively, with an ad hoc rule to give
    preference to higher-weight operators.
  \item Finally, drop any qubits which support weight-one generators.
    (Since the group is abelian by construction, this step is identical
    to Step 3 in Algorithm \ref{alg:guaranteed}.)
  \end{enumerate}
\end{algorithm}
Compared to Algorithm~\ref{alg:guaranteed}, Steps \ref{step:edge} and
\ref{step:corner} here ensure the absence of bounded-weight logical
operators without the need of the additional edge
conditions~(\ref{eq:horizontal-edge-condition-inv}) or
(\ref{eq:edge-vert}). With a sufficiently large region, the resulting
codes are expected to have the dimension equal to the quantum
dimension associated with the TO on the infinite plane, see
Eq.~(\ref{eq:BBS-k}).

We have examined the optimized code families presented in
Ref.~\onlinecite{Liang-Eberhardt-Chen-2025}. It turns out that for
all LEC code families based on weight-3 polynomials,
$\wgt a=\wgt b=3$, summarized in Table I there, all four edge
conditions (\ref{eq:horizontal-edge-condition-inv}),
(\ref{eq:edge-vert}) are satisfied, and Algorithm~\ref{alg:guaranteed}
recovers the codes with the same parameters. Moreover, for all these
codes, we have been able to construct the sets of stabilizer
generators which are simply a puncture of the corresponding bulk
generators to the set of qubits in the code. That is, for these
polynomial pairs Algorithm~\ref{alg:guaranteed}, based entirely on
linear algebra, gives exactly the same codes as constructed by the LEC
algorithm.

However, for the only LEC code family based on polynomials of weight
four (Sec IV.D of Ref.~\onlinecite{Liang-Eberhardt-Chen-2025}), one of
the edge conditions is not satisfied. We examine this case in detail
in Example~\ref{ex:edge-off} (see also the related Example~\ref{ex:edge-off-dual}):

\begin{figure}[htbp]
  \centering
  \includegraphics[width=0.95\columnwidth]{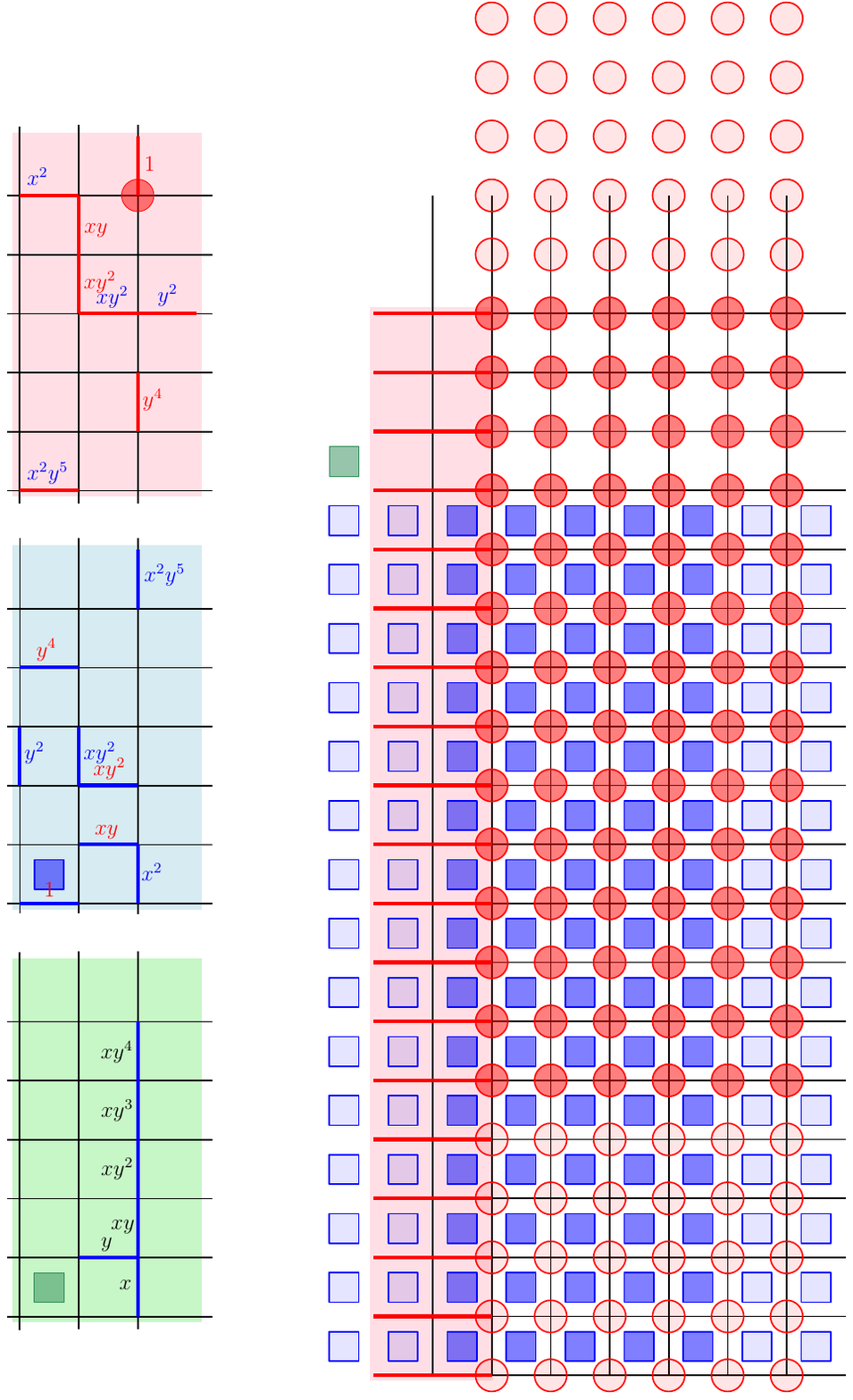}
  \caption{As in Fig.~\ref{fig:tile-corner} but for polynomials in
    Example \ref{ex:edge-off}. Dark and light blue squares (bulk \&
    edge $Z$ generators) and red circles (bulk \& edge $X$
    generators), and the additional $X$ generator of weight $38$ shown
    as the red ladder with pink shading are the stabilizer generators
    produced by Algorithm \ref{alg:guaranteed}. Sea-green tile is one
    of the 16 mutually-degenerate small-weight $Z$-logical operators
    in the code $[[292,13,14/6]]$ obtained by removing the high-weight
    $X$ generator. Promoting this logical operator to the stabilizer
    group gives a code with parameters $[[292,12,14/14]]$ identical to
    those of the original LEC code discussed in Sec IV.D of Ref.~\onlinecite{Liang-Eberhardt-Chen-2025}.}
  \label{fig:edge-off}
\end{figure}
\begin{example}
  \label{ex:edge-off} Consider a pair of binary weight-four
  polynomials 
  $a=y^2+xy^2+x^2(1+ y^5)$, $b=1+y^4+xy(1+y)$
  from Sec.~IV.D in Ref.~\onlinecite{Liang-Eberhardt-Chen-2025}, which
  gives, in particular, an LEC code with parameters $[[292,12,14]]$.
  These polynomials violate the vertical edge
  condition~(\ref{eq:edge-vert}), namely
  $\gcd\biglb(a_{\rm max}(y),b_{\rm max}(y)\bigrb)=1+y$, which is
  expected to give bounded-weight $Z$ chains on the left vertical edge.
  Algorithm~\ref{alg:guaranteed} with $L_x=8$, $L_y=19$ gives a code
  with parameters $[[292,12,14/16]]$, with $75$ bulk and $60$ boundary
  $Z$ stabilizer generators shown in Fig.~\ref{fig:edge-off} as dark
  and light blue squares, plus the total of $145$ $X$ generators: $84$
  bulk (dark red circles), $60$ boundary (light red circles), and an
  additional high-weight $X$ generator shown as a red ladder.
  Removing this generator gives a code $[[292,13,14/6]]$ with 16
  mutually-degenerate $Z$ codewords of weight $6$ shown in
  Fig.~\ref{fig:edge-off} as a sea-green tile.
  
  These codewords are exactly consistent with the analysis in
  Sec.~\ref{sec:strip-smooth-rough}. Namely, here the common factor
  is $1+y$, and we can take $a'_{\rm max}=1+y+y^2+y^3+y^4$,
  $b'_{\rm max}=y$ in a dual version of
  Eq.~(\ref{eq:banded-H1-prime}). Adding any of these boundary
  logical operators to the stabilizer group gives a code with
  parameters $[[292,12,14/16]]$ identical to the original LEC code.
  As an option, all $15$ $Z$ generators on the left edge may be
  replaced with these modified boundary operators (i.e., painting the
  leftmost column of light-blue squares in Fig.~\ref{fig:edge-off}
  with sea-green color.)
\end{example}
\begin{example}\label{ex:edge-off-dual}
  Consider a pair of binary weight-four
  polynomials 
  $a=x^2+yx^2+y^2(1+ x^5)$, $b=1+x^4+yx(1+x)$, i.e., the polynomials
  from Example \ref{ex:edge-off} with swapped variables, and $L_x=19$,
  $L_y=8$, equivalent to swapping rough and smooth boundaries. In this
  case $\gcd\biglb(a_{\rm max}(x),b_{\rm
    max}(x)\bigrb)=1+x$. Algorithm~\ref{alg:guaranteed} gives a code
  with parameters $[[314,13,6/12]]$, with 16 weight-$6$ logical $X$
  operators at the edge. Adding one of them to the stabilizer group
  gives a code with parameters $[[314,12,14/14]]$.  Exactly the same
  code is obtained by starting from a horizontal strip with modified
  edge operators, see Eq.~(\ref{eq:banded-H1-prime}).
\end{example}

To conclude this section, Algorithm \ref{alg:guaranteed} as stated, in
addition to the bulk condition in Lemma \ref{th:bulk}, requires the
additional edge conditions (\ref{eq:horizontal-edge-condition-inv}),
(\ref{eq:edge-vert}). For such polynomials, it is expected to give
codes similar to those from the LEC algorithm, namely, with the same
dimension, and similar distances that depend on the exact choice of
the corner generators in the latter algorithm.

The main advantage of Algorithm \ref{alg:guaranteed} is that it is
automatic, i.e., does not depend on an ad-hoc prescription for
choosing corner stabilizer generators, which simplifies a search of
interesting code families by exhaustive enumeration. In addition,
Algorithm \ref{alg:guaranteed} works with tilted boundaries.

On the other hand, since it relies on Gauss elimination to construct
plaquette generators, there is no guarantee that Algorithm
\ref{alg:guaranteed} gives a code with all stabilizer generators of
sufficiently small weight, or even that such a set of generators can
be constructed. In practice, the code candidates with interesting
parameters, e.g., returned by an exhaustive search with Algorithm
\ref{alg:guaranteed}, need to be examined more carefully to select
polynomial pairs which give code families with suitable sets of
stabilizer generators.

We note also that code length may be additionally reduced by lattice
grafting\cite{Liang-Eberhardt-Chen-2025}, possibly at the price of
increasing the generator weights. We do not discuss lattice grafting
in this work as it tends to produce irregular stabilizer generators
which makes circuit implementation difficult.

\subsection{Numerical results}
\label{sec:num}

We implemented Algorithm \ref{alg:guaranteed} in Python, utilizing the
{\tt vecdec} package for distance
calculations\cite{Pryadko-2025-vecdec}. We performed an extensive
search over inequivalent pairs of mutually prime bivariate polynomials
with weights $\wgt a = \wgt b=3$, where the absolute values of the $x$
and $y$ monomial degrees did not exceed $D_{\rm max}=3$. For each
polynomial pair, after verifying the corresponding edge conditions, we
constructed both regular (rectangular) and rotated BBS codes with
strip dimensions $L_1, L_2\le 30$, focusing on codes of lengths
$n\le n_{\rm max}=300$ of dimension $k\le k_{\rm max}=13$.  We
optimized the aspect ratios of these regions to balance the CSS
distances $d_X$ and $d_Z$. Provided the region size was sufficiently
large (avoiding finite-size effects where $k<\kappa$), all generated
code families strictly reproduced the expected topological dimension
from Eq.~(\ref{eq:horiz-vert}).

The resulting codes are cataloged in Appendix~\ref{app:num-table}. For
each $(k,d)$ pair, we report the minimum achieved block length
$n_{\rm min}(k,d)$, the generating polynomials, the boundary
orientation (regular or rotated), and the ratio $kd^2/n$ which
compares the encoding rate with that of rotated surface codes.  The
structure of generators of one of the constructed codes is illustrated in
Fig.~\ref{fig:factor-max}.

\begin{figure}[htbp]
  \centering
  \includegraphics[width=0.90\columnwidth]{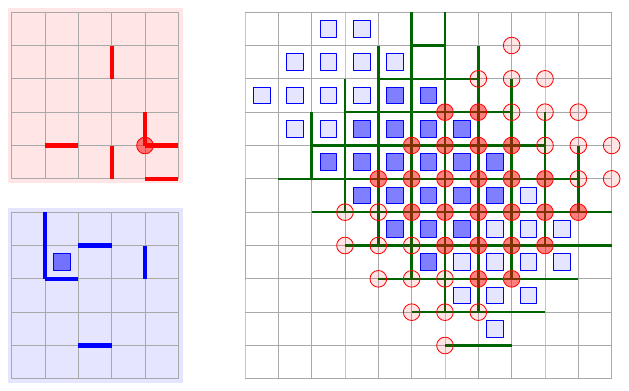}\hfill\strut
  \caption{As in Fig.~\ref{fig:surf-lattice} but for a BBS code
    $[[103,10,5]]$ with diagonal boundaries constructed from binary
    polynomials {\color{blue} $a=1+y^3+xy^2$} and
    {\color{red}$1+x^3+x^3y$}.  Here thick green lines show the edges
    corresponding to the qubits in the code.  Some other codes in this
    family have parameters $[[66,10,4]]$, $[[125,10,6]]$,
    $[[148, 10, 7]]$, $[[186, 10, 8]]$, $[[198,10,9]]$,
    $[[227,10,10]]$, $[[278,10,11]]$, and $[[292,10,12]]$ (in
    Tab.~\ref{tab:num-table}).  The last four codes in this list have
    $kd^2/n>4$.}
  \label{fig:factor-max}
\end{figure}
Our search space significantly expands upon the one explored in
Ref.~\onlinecite{Liang-Eberhardt-Chen-2025} by relaxing restrictions
on the polynomial form and explicitly incorporating diagonal
boundaries. Consequently, we found codes with shorter block lengths
for nearly all $(k,d)$ pairs, see Appendix \ref{app:num-table}.  More
remarkably, as highlighted in Table~\ref{tab:num-cmp}, in two cases,
our regular, ungrafted codes are shorter than the highly optimized,
lattice-grafted codes presented in
Ref.~\onlinecite{Liang-Eberhardt-Chen-2025}.

\begin{table}[htbp]
  \centering
  \begin{tabular}{c c | c | c | c}
    \hline\hline
    $k$ & $d$ & Orig.\ $n$ \cite{Liang-Eberhardt-Chen-2025} & Grafted $n$ \cite{Liang-Eberhardt-Chen-2025} & {This work $n$} \\ \hline
    6 & 4 & 54 & 44 & \textbf{42}\\
    6 & 6 & 88 & 78 & \textbf{72}\\
    7 & 7 & 131 & \textbf{107} & 126\\
    8 & 9 & 188 & \textbf{173} & 184\\
    8 & 12 & 288 & \textbf{268} & 288\\ \hline\hline
  \end{tabular}
  \caption{Brief comparison of minimum block lengths $n$ for selected
    BBS codes found using Algorithm \ref{alg:guaranteed} versus the
    original and lattice-grafted codes from
    Ref.~\onlinecite{Liang-Eberhardt-Chen-2025}.}
  \label{tab:num-cmp}
\end{table}

\section{Conclusion}
\label{sec:conclusion}

To summarize, we have conducted a careful analysis of the structure of
codewords for families of bivariate-bicycle (BB) codes on the infinite
plane and in finite-width strips with different boundary conditions,
with a focus on sublattice translation and group isomorphism
symmetries. We identified the structure of balanced boundaries
optimal for specific polynomials, and exact edge conditions on the
polynomials which guarantee that stabilizer generators at the boundary
can be obtained merely by truncation of the bulk generators. When the
edge conditions are not satisfied, we have also constructed the
polynomials generating the required boundary generators.

One unexpected result is that key features of this construction are
related to the number and location of common roots of the polynomials
defining the code. Namely, the common dimension of a family of BBS
codes based on a pair of mutually prime polynomials
$a,b\in F[x^{\pm1},y^{\pm1}]$ (equal to the total quantum dimension
associated with the ground-state TO on the infinite plane) is
associated with the number of finite and non-zero common roots of the
two polynomials in the extension field, counted with algebraic
multiplicity. This number remains invariant under symmetries of the
parent chain complex on $\mathbb{Z}^2$: CSS duality, sublattice
translations, or arbitrary invertible variable substitutions in the
polynomials corresponding to automorphisms of the infinite plane.

On the other hand, such roots located at $x=0$ or infinity, and $y=0$
or infinity, respectively, signify that specially modified stabilizer
generators on the left, right, bottom, or top boundaries are required.
Unlike the finite non-zero roots, these boundary roots are only
invariant to sublattice translations but not to more general
invertible affine transformations of $\mathbb{Z}^2$. Such
transformations map between pairs of parallel boundaries running in
different directions and, in particular, between pairs of tilted and
horizontal boundaries. This gives a general approach to constructing
edge generators for a boundary along an arbitrary lattice direction,
and, in particular, predicting when simple truncation would be
sufficient.

Based on these results, we formulate an algorithm for constructing
surface versions of bivariate-bicycle codes, including codes with
rectangular, diagonal, and, more generally, arbitrarily tilted
boundaries. A notable advantage of the algorithm is that it preserves
the sublattice translation symmetry characteristic of the BB codes:
polynomials rescaled by monomial factors always produce equivalent
codes. In addition, the algorithm comes with a guarantee: any pair of
polynomials that satisfies explicitly stated bulk and edge conditions
gives a family of codes with fixed $k$ and a distance increasing
without bound, asymptotically as $\mathcal{O}(n^{1/2})$.

This construction generalizes the results on tile codes from
Refs.~\cite{Steffan-etal-Eberhardt-2025,Breuckmann-etal-Steffan-2025},
in the sense that all tile codes can be exactly reproduced by
Algorithm \ref{alg:guaranteed}, but not all codes generated by our 
algorithm can be obtained as tile codes. On the other hand, our 
algorithm does not inherently minimize overhead to the extent of the 
lattice grafting techniques in Ref.~\onlinecite{Liang-Eberhardt-Chen-2025}. 
While two of our regular codes outperform the grafted codes constructed in 
Ref.~\onlinecite{Liang-Eberhardt-Chen-2025} (see Sec.~\ref{sec:num}), 
this success is primarily due to the much broader search space enabled 
by Algorithm \ref{alg:guaranteed}, particularly the inclusion of rotated 
boundaries and unrestricted polynomials. To achieve the absolute minimum 
block length, a combined approach should be used: heavy optimization 
protocols, such as lattice grafting, should be applied to a select few 
optimal polynomial pairs identified via Algorithm \ref{alg:guaranteed}. 
In particular, all tile codes and many BBS codes from Algorithm 
\ref{alg:guaranteed} have edge stabilizer generators which can be 
obtained simply by trimming those in the bulk. This allows for 
time-optimal measurement schemes similar to those for surface codes, 
where edge and bulk generators share the addressing patterns of the 
corresponding BB codes.

\begin{acknowledgments}
  This work was supported in part by the NSF award 2112848 (LPP).
\end{acknowledgments}

\appendix

\section{Dimension of the quotient space and common roots of polynomials}
\label{app:root-count}

In general, the relation between the dimension of a finite-dimensional quotient
space in a ring of multivariate polynomials with coefficients in a
field $F$ and the number of common roots of a polynomial ideal can be
stated as follows [See, e.g., Ref.~\onlinecite{Cox-Little-OShea-2005}]:
\begin{theorem}
  Let $\cal J$ be a zero-dimensional ideal in the polynomial ring
  $F[x_1, \ldots, x_n]$, and let $\overline{F}$ be the algebraic
  closure of $F$. The quotient ring $F[x_1, \ldots, x_n]/{\cal J}$ is a
  finite-dimensional vector space over $F$, and its dimension is
  exactly equal to the total number of common roots of the ideal in
  $\overline{F}^n$, where each root is counted with its algebraic
  multiplicity.
\end{theorem}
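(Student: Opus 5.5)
The plan is to reduce to the local structure of $A\equiv F[x_1,\ldots,x_n]/{\cal J}$ and then count points after extending scalars to $\overline F$. First, finite dimensionality. Zero-dimensionality of ${\cal J}$ means its variety $V({\cal J})\subset\overline F^n$ is finite. Hence for each $i$ the projection to the $x_i$-axis is a finite set, and the univariate polynomial $g_i(x_i)=\prod_{p\in V}(x_i-p_i)$ vanishes on $V$. By the Nullstellensatz some power $g_i^{N}$ lies in ${\cal J}\,\overline F[\bs x]$. Taking the product of its Galois conjugates, or simply the norm down to $F$, we get a nonzero univariate polynomial in ${\cal J}\cap F[x_i]$, since ${\cal J}\overline F[\bs x]\cap F[\bs x]={\cal J}$ by flatness of field extension. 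With a univariate element of degree $d_i$ in each variable, the monomials $x^\alpha$ with $\alpha_i<d_i$ span $A$, so $\dim_F A<\infty$.

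Second, I would extend scalars. Since $\dim_F A=\dim_{\overline F}(A\otimes_F\overline F)$ and $A\otimes_F\overline F=\overline F[\bs x]/{\cal J}\overline F[\bs x]$, it suffices to treat $F=\overline F$. The roots do not change under this step, because they were already defined in $\overline F^n$.

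Third, I would decompose the finite-dimensional algebra. An Artinian ring is a finite product of local Artinian rings, and here this takes the form
\[
\overline F[\bs x]/{\cal J}\cong\prod_{p\in V({\cal J})}\overline F[\bs x]_{\mathfrak m_p}/{\cal J}_{\mathfrak m_p}.
\]
To get this concretely, use the primary decomposition ${\cal J}=\bigcap_p Q_p$ with $\sqrt{Q_p}=\mathfrak m_p$. The ideals $Q_p$ are pairwise comaximal, since their radicals are distinct maximal ideals, so the Chinese Remainder Theorem applies. Then $\dim A=\sum_p\dim_{\overline F}\overline F[\bs x]/Q_p$, and the term for $p$ equals $\dim \overline F[\bs x]_{\mathfrak m_p}/{\cal J}_{\mathfrak m_p}$. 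This is precisely the algebraic multiplicity of the root $p$ in the standard sense of Cox--Little--O'Shea, which I take as the definition.

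The only delicate point is the one that makes the statement nontrivial: the multiplicity must be defined as this local length, and the identification of $\overline F[\bs x]/Q_p$ with the localization has to be checked. The check is that every element outside $\mathfrak m_p$ is already a unit modulo $Q_p$, because $\overline F[\bs x]/Q_p$ is local with maximal ideal $\mathfrak m_p/Q_p$. A second point is that the reduction from $\overline F$ back to $F$ needs no separability assumption, since dimension is preserved by base change. That holds over finite fields anyway, as they are perfect.

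For the paper's Laurent setting, I would add a short corollary. The ring $F[x^{\pm1},y^{\pm1}]/\langle a,b\rangle$ is the localization of $F[x,y]/\langle a,b\rangle$ at the multiplicative set generated by $xy$. In the product decomposition above, this localization kills exactly the local factors at roots with $xy=0$, because $xy$ is nilpotent in those factors and a unit in all the others. This yields the count of finite, nonzero roots used in Eq.~(\ref{eq:BBS-k}).
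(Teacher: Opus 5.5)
Your proposal is correct and is essentially the standard argument of Cox--Little--O'Shea, which the paper cites instead of giving its own proof. Finiteness comes from univariate elements of ${\cal J}$ obtained via the Nullstellensatz, and $\overline F[\bs x]/{\cal J}\overline F[\bs x]$ then splits by the Chinese remainder theorem into local factors whose dimensions define the multiplicities. Your explicit descent from $\overline F$ to $F$ (the reference works over $\mathbb{C}$) and your Laurent-ring corollary are sound additions. The one step you leave implicit, $({\cal J}\overline F[\bs x])_{\mathfrak m_p}=(Q_p)_{\mathfrak m_p}$, follows at once because $Q_{p'}\not\subseteq\mathfrak m_p$ for $p'\neq p$.
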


In particular, this applies to any abelian two-block group-algebra
code\cite{Kalachev-Panteleev-2020,Wang-Lin-Pryadko-2023,Lin-Pryadko-2023},
a special case of lifted-product codes over an abelian group
${\cal G}$. For a finite group presented with $n$ commuting generators
and $m$ relators, using $\bs x\equiv (x_1,x_2,\ldots, x_n)$,
$$
{\cal G}=\left\langle \bs x\mid r_1(\bs x)=r_2(\bs x)=\cdots =r_m(\bs x)=1\right\rangle_{\rm abelian},
$$
and a finite field $F$, such a code can be defined by any pair of
polynomials $a,b\in F[\bs x]$. Namely, the ideal in question is
${\cal J}=\langle a,b,r_1-1,\ldots, r_m-1\rangle$, while the dimension
of the code is given by\cite{Liang-Liu-Song-Chen-2025}:
$$
k_{\text{two-block abelian}}=2\dim_F\frac{F[\bs x]}{ {\cal J}}.
$$

For BB codes\cite{Bravyi-etal-Yoder-2023} defined by a
pair of polynomials $a,b\in F[x,y]$ on an $L_x\times L_y$ torus with
periodicity vectors along the axes, this reduces to, 
$$
k_{\text{BB}}(a,b;L_x,L_y)=2\dim_F\frac{F[x,y]}{ \langle
  a,b,x^{L_x}-1,y^{L_y}-1\rangle}.
$$
where the factor of $2$ is associated with the genus of the torus.
Related but different is the total quantum dimension $k_{\rm TO}$
associated with the TO in the ground state manifold of the
corresponding Hamiltonians on the infinite plane, which is defined in
terms of topological excitations.  Specifically for the BB codes on
the infinite plane\cite{Liang-Liu-Song-Chen-2025}, working in the ring
of Laurent polynomials, where zero coordinates are strictly forbidden,
one has\cite{Liang-Liu-Song-Chen-2025}:
$$
k_{\text{TO}} = \dim_F{F[x^{\pm 1},y^{\pm 1}]\over
  \langle a,b\rangle}.
$$
This can be related to the ground-state dimension
$k_{\text{BB}}(a,b;L_x,L_y)$ by the supremum:
$$
k_{\text{TO}} = \frac{1}{ 2}k_{\text{BB}}(a,b;\mathbb{Z}^2)={1\over2} \sup_{L_x,L_y\in \mathbb{N}}
k_{\text{BB}}(a,b;L_x, L_y).
$$
Here any roots with $x=0$ or $y=0$ are automatically excluded for any
$L_x$, $L_y$. 

More directly, this dimension can be computed using Rabinowitsch's
trick by introducing an auxiliary variable $t$ and an extra equation, $txy=1$, to
enforce non-zero coordinates:
$$
k_{\text{TO}}(a,b;\mathbb{Z}^2)= \dim_F{F[x,y,t]\over
  \langle a,b,xyt-1\rangle}.
$$
Specifically, the dimension on the right-hand side can be computed by constructing a
Gr\"obner basis of the ideal $ \langle a,b,xyt-1\rangle$ and counting
the standard monomials with respect to that basis.

\section{Proof of Statement \ref{th:SNF-Mx}}
\label{app:proof-th:SNF-Mx}

The original matrix $ {\sf M}_X$ with elements
in $F[x]$ has $N\equiv m+\Delta$ rows and $2m$ columns. By definition
of the Smith normal form, the $k$-th invariant factor $d_k$ of ${\sf M}_X$ is
$$d_k = \frac{\Delta_k}{\Delta_{k-1}},$$ where $\Delta_k$ is
the greatest common divisor (GCD) of all $k \times k$ minors of
${\sf M}_X$, and $\Delta_0 \equiv 1$. Thus, to prove that ${\sf M}_X$
has $N$ non-zero SNF invariants, we just need to find a non-zero minor
of size $N\times N$, either in the matrix ${\sf M}_X$, or in any
matrix obtained from ${\sf M}_X$ by invertible row and/or column
transformations.

Operating in $F[x]$, a GCD domain, denote
$$h_0(x)\equiv \gcd(a_0,b_0), \quad
\dot a_0\equiv a_0/h_0,\quad 
\dot b_0\equiv b_0/h_0,$$ and let $r(x)$, $s(x)$ be the corresponding
B\'ezout coefficients such that $r(x) a_0(x)+s(x)
b_0(x)=h_0(x)$. Multiplying $ {\sf M}_X$ by the block-diagonal
unimodular (and thus invertible over $F[x]$) matrix
$$
U=\diag(Q,Q,\dots,Q),\quad Q\equiv \left(
  \begin{array}[c]{cc}
    r& -\dot b_0\\ s&\dot a_0
  \end{array}
\right)
$$
from the right gives a matrix $M_1\equiv M_1^{(N,2m)}$ with a
symmetric block structure similar to that of ${\sf M}_X$, but with the mapping
$a_0\mapsto h_0$ and $b_0\mapsto 0$ in the first non-zero row of each
pair of columns. To form a non-zero minor, select the first
$m-\Delta$ odd columns and the last $2\Delta$ columns of $M_1$, which
gives a square matrix $M_{1,\rm sub}^{(N,N)}$ featuring a lower-triangular
submatrix in the first $m-\Delta$ rows. The determinant is readily
simplified by repeated first-row expansion, yielding:
$$
\det M_{1,\rm sub}^{(N,N)}=h_0^{m-\Delta}\det {\sf M}_X^{(\Delta)}.
$$
The first term in the product on the right-hand side is non-zero by
the assumption that both $a_0\neq 0$ and $b_0\neq0$, while the
remaining determinant is also non-zero: according to
Eq.~(\ref{eq:det-Mx}), this is guaranteed by the assumption that $a$
and $b$ are mutually prime.

If we now enforce the first additional condition in Eq.~(\ref{eq:SNF-Mx}),
equivalent to $h_0(x)=1$, we return to the matrix $M_1^{(N,2m)}$, which
now contains $(1,0)$ at the top of each two-column block. We also
construct the transformed matrix $M_2\equiv U^{-1}{\sf M}_Z^T$, which has $(0,1)^T$
at the top of each column. By construction, orthogonality is preserved: $M_1M_2=0$.

Next, we use row transformations to construct $M_2'$, a reduced row
echelon form of $M_2$, and apply the corresponding orthogonal column transformations
$M_1\mapsto M_1'$ to render the first $m-\Delta$ even-numbered columns
of $M_1'$ to zero while preserving the product $M_1'M_2'=0$. The
third step uses row transformations to construct $M_1''$, a
reduced row echelon form of $M_1'$ within the first $m-\Delta$ odd columns.

Crucially, because of the strict block-diagonal elimination, these transformations leave unchanged the
lower-right square portion of $M_1^{(N,2m)}$ of size $2\Delta$, which
is row-/column-transformation similar to the original square boundary matrix
${\sf M}_X^{(\Delta)}$. The corresponding rows of $M_2'$ are
identically zero by construction. After suitable row and column
transformations, the $2\Delta\times 2\Delta$ submatrix in the
lower-right corner of $M_1''$ is replaced by its SNF form, with the
diagonal formed strictly by the SNF invariants of ${\sf M}_X^{(\Delta)}$.

Up to a column permutation, this completes a transformation of the
original matrix ${\sf M}_X$ into the claimed diagonal form, featuring
exactly $m-\Delta$ unit invariants appended to the original $2\Delta$
non-zero invariants of ${\sf M}_X^{(\Delta)}$.  The proof for the
other edge condition, $\gcd(a_\Delta,b_\Delta)=1$, follows identically
by reversing the row and column order. \hfill\qed

\section{Proof of Statement \ref{th:SNF-Mz}}
\label{app:proof-SNF-Mz}

Denote $N\equiv m-\Delta$ as the number of rows of the matrix
$M\equiv {\sf M}_Z$ over $F[x]$ [Note that Eq.~(\ref{eq:banded-H2})
gives the transposed matrix].  Similar to the previous proof, we just
need to look at the minors of size $N\times N$ spanning all rows of
$M$.

The first statement can be trivially obtained by selecting the minor
which contains the leftmost non-zero element from each row of $M$. As
the determinant of an upper-triangular matrix with non-zero elements
along the diagonal (of size $N\times N$), it is non-zero. This proves
$\Delta_N\neq0$, yielding exactly $N$ non-zero SNF invariants.

To show that all invariant factors $d_1, d_2, \dots, d_N$ are actually
equal to $1$, it is sufficient to prove that the GCD of all
$N\times N$ minors is $\Delta_N = 1$. We have already established that
$\Delta_N\in F[x]$ is non-zero. Assuming $\Delta_N$ is not a unit,
there must exist some non-constant irreducible polynomial
$q(x) \in F[x]$ that divides every single $N \times N$ minor of the
original matrix. Let $K$ be the quotient field
$F[x] / \langle q(x) \rangle$.

Denote by $\overline{M}$ the matrix $M$ with all its entries evaluated
modulo $q(x)$. Since $q(x)$ divides every $N \times N$ minor of $M$,
every $N \times N$ minor of $\overline{M}$ evaluates to $0$ in the
field $K$. This implies that the rows of $\overline{M}$ are linearly
dependent over $K$.

However, because the original bivariate polynomials $a$ and $b$ are
mutually prime, there is no polynomial $q(x)$ that can simultaneously
divide all $y$-coefficients $a_i(x)$ and $b_i(x)$, $0\le i\le
\Delta$. Therefore, for at least one index $i$, either
$a_i(x) \not\equiv 0$ or $b_i(x) \not\equiv 0 \pmod{q(x)}$. By
analyzing the banded, translationally invariant structure of $M$ [see
Eq.~(\ref{eq:banded-H2})], the shifted placement of these non-zero
coefficients guarantees that the rows of $\overline{M}$ remain
linearly independent over $K$. This contradicts the assumption that
$\overline{M}$ has deficient rank. Necessarily, $\Delta_N=1$,
completing the proof.\hfill$\qed$

\section{Proof of Statement~\ref{th:mix-edge-strip}}
\label{app:th:mix-edge-strip}

The proof of the first part  and the ``if'' part of the second are
similar to that in Appendix \ref{app:proof-th:SNF-Mx}, except here the
transformed matrices $M_1\equiv {\sf M}_XU$ and
$M_2\equiv U^{-1}{\sf M}_Z^T$ have enough unit pivot elements to
simultaneously transform them to block-diagonal matrices,
$M_1\mapsto (I_m,0)$ and $M_2\mapsto(0,I_m)^T$, where $I_m$ is an
$m\times m$ identity matrix.

To prove the ``only if'' part, notice that $\gcd(a_0,b_0)$ is a common
factor of the first row of ${\sf M}_X$. Since this matrix has $m$
rows, any minor of size $m$ must contain $\gcd(a_0,b_0)$ as a factor.
The proof goes similarly for ${\sf M}_Z^T$ which has $m$ columns and has
$\gcd(a_0,b_0)$ as a common factor of elements in the last column.
\hfill\qed

\section{Number of localized solutions at a lower smooth edge}
\label{app:root-edge-count-proof}

Here we show that the number of $X$ co-chain solutions localized at a
lower horizontal smooth edge (in a sufficiently wide strip) coincides
exactly with the number of common roots of polynomials $a(x,y)$,
$b(x,y)$ at $y=0$, counted with algebraic multiplicity.  The number of
localized co-chains is the degree spread of the product of SNF
invariants of the matrix (\ref{eq:banded-mix-M1}).  We are going to
modify this matrix row-by-row, factoring out common factors in each
row, which gives a matrix with the top $m$ rows identical to the top
row block of ${\sf M}_X'$ in Eq.~(\ref{eq:banded-H1-prime}).

Using arguments similar to those in Appendix \ref{app:proof-SNF-Mz},
one can see that the transformed matrix has exactly $m$ unit SNF
invariants; the number of localized solutions is given by the degree
of the product of the extracted factors,
$$
f(x)=\prod_{i=0}^{m-1} h_i(x).
$$
As in Eq.~(\ref{eq:h0}), $h_{j-1}(x)$, $j>0$, is the GCD of the
elements in the $j$-th row of matrix (\ref{eq:banded-mix-M1}), or,
equivalently, the common GCD of the first $j$ expansion terms of the
two polynomials in powers of $y$; we assume $h_i(x)$ to be monic with
non-zero free terms.  Denoting $k_i\equiv \deg h_i(x)$, we see that
along $y=0$, the polynomials have $k_0-k_1$ common roots of
multiplicity one in $y$, $k_1-k_2$ common roots of multiplicity two,
etc, which gives for the number of solutions localized at the edge
\begin{eqnarray*}
  \deg f(x)& =& k_0+k_1+k_2+\ldots +k_{m-1}\\
           &=& mk_m + \sum_{j=1}^{m}j (k_{j-1}-k_j).   
\end{eqnarray*}
The sum in the second term explicitly counts each root weighted by its exact
intersection multiplicity.  By assumption, the polynomials have no
common factors, thus $k_{i}=0$ for any $i>\Delta$, the maximum $y$
degree of the polynomials.  Thus, for any $m>\Delta$, all roots at
$y=0$ with their multiplicities are included in the count.\hfill$\qed$

\section{Proof of Statement \ref{th:modified-exactness}}
\label{app:proof-th:modified-exactness}

The proof follows arguments similar to those in Appendix
\ref{app:root-edge-count-proof}, except here we must extract common
factors at both boundaries of the strip, corresponding to common roots at
$y=0$ and $y=\infty$.

The original strip with bare edges supports $\kappa_{\rm horiz}$
solutions [see Eq.~(\ref{eq:kappa-horiz})], given by the degree spread
of the $y$-resultant of the two polynomials. This counts all roots
$(x,y)$ such that $x \neq 0$, including those where $y$ is zero, infinite,
or finite. By modifying the bottom and top edges, we explicitly remove
the roots at $y=0$ and $y=\infty$, respectively, leaving only the roots with
both coordinates finite and non-zero.

\section{Univariate polynomials}
\label{app:univariate}
Consider the matrices (\ref{eq:banded-H1}) and (\ref{eq:banded-H2}) with
$m=\Delta$ in the special case where one of the polynomials depends
only on one variable---say, $b(x,y)=b_0(x)$---while the other polynomial
has degree $\Delta\equiv \deg_y a>0$. As a reminder, here we assume
polynomials $a,b\in F[x,y]$ non-vanishing identically at $y=0$, $a_0(x)\neq0$ and
$b_0(x)\neq0$.

With $\deg_y b(x)=0$, the determinant of ${\sf M}_X$ with $m=\Delta$
equals $[a_\Delta(x)]^\Delta [b_0(x)]^\Delta$. Assuming a unit
leading coefficient $a_\Delta(x)=1$ as required by the condition
$\gcd(a_\Delta(x),b_\Delta(x))=1$, only the second term contributes to the
degree spread, yielding:
\begin{equation}
  \kappa_1=\deg b_0(x)\,\Delta\equiv\deg_x b\,\deg_y a, \quad \text{for } \deg_y b=0.
  \label{eq:kappa-univariate}
\end{equation}

Alternatively, consider the horizontal edge conditions in invariant
form (\ref{eq:horizontal-edge-condition-inv}), explicitly
$\gcd(a_0,b_0)=\gcd(a_\Delta,b_0)=1$. A minimum-width strip contains
$\Delta+1$ horizontal edges and one vertical edge. The matrices take the form:
$$
    {\sf M}_X
    =\left(
        \begin{array}[c]{c|ccccccc}
          a_0&b_0&   \\ 
          a_1&  &b_0 \\
          {\scriptscriptstyle\boldsymbol\vdots}
              &  & &{\scriptscriptstyle\boldsymbol\ddots} \\ 
          a_{\Delta}&  & & & b_0  
        \end{array}\right),\quad
    {\sf M}_Z^T
    = \left(
        \begin{array}[c]{c}
          b_0  \\ \hline 
          - a_0 \\
          - a_1\\ 
          {\scriptscriptstyle\boldsymbol\vdots}\\
          - a_{\Delta}      \end{array}\right). 
$$
The bulk condition implies that the matrix ${\sf M}_Z$ has only one
unit SNF invariant, while the product of SNF invariants of ${\sf M}_X$
is given by Eq.~(\ref{eq:kappa-univariate}).

Finally, consider the case $\deg_y a =\deg_y b=0$. The parent complex
on the infinite plane separates into independent horizontal strips,
each including a row of horizontal and a row of vertical edges. Under
the horizontal edge condition (\ref{eq:edge-condition-m1}), with $m=1$,
both ${\sf M}_X$ and ${\sf M}_Z$ have exactly one unit invariant, yielding
$\kappa_1=0$. Otherwise, if 
\begin{equation}
\gcd \biglb(a_0(x),b_0(x)\bigrb)=h_0(x)\neq 1, \label{eq:gcd-uni}
\end{equation}
the only non-zero SNF invariant of each matrix is $h_0(x)$. This gives
$\deg h_0(x)$ infinite horizontal chains and the same number of
finite-weight co-chains, in agreement with Lemma~\ref{th:bulk}. That is,
a BBS code family constructed from a fixed pair of univariate
polynomials $a,b\in F[x]$ is either trivial or has a distance bounded
by a constant.

\section{Proof of Statement \ref{th:code-rectang}}
\label{app:th:code-rectang}

With the results on the structure of logical operators established in
Sections \ref{sec:covering-plane} and \ref{sec:with-boundaries}, this
statement follows from $Z$-shortening and $Z$-puncturing bounds
(Lemmas 1 and 2 in Ref.~\onlinecite{Zeng-Pryadko-hprod-2020}), which
must be adapted to an original code defined on an infinite strip.

The most straightforward resolution is to consider an original code on a horizontal
strip with periodic boundary conditions along the $x$-direction, with
period $L$ chosen so that all non-zero roots of the resultant in
Eq.~(\ref{eq:kappa-horiz}) are among the roots of $x^L-1$. Given
$r(x)\equiv\res_y\biglb(a(x,y),b(x,y)\bigrb)\in F[x]$, we define
$f(x)=r(x) x^{-\deg_{\rm min} r(x)}$ so that $f(0)\neq 0$.
The minimum $L$ such that $x^L-1$ is divisible by $f(x)$ is given
by $L=\ord f(x)$, the order of the polynomial $f(x)$. With
$F\equiv \text{GF}(q)$ a Galois field of size $q=p^m$, $\ord f$ is found by
factorizing $f(x)\in F[x]$ into irreducible polynomials:
$f(x)=c\,\prod_j [f_j(x)]^{e_j}$. The order is then given by
$\ord f =\lcm\biglb(\ord f_1^{e_1}, \ord f_2^{e_2}, \ldots\bigrb)$,
with $\ord f_i^{e_i}=p^{t_i}m_i$, where $t_i=\lceil\log_p e_i\rceil$ and $m_i\equiv \ord f_i(x)$ is the 
minimum divisor of $q^{d_j}-1$ such that $x^{m_i}-1$ is divisible by $f_i(x)$.

With this period $L$ along the $x$-direction, the number of independent
solutions in the infinite strip (with balanced smooth boundaries and 
sufficiently large $L_y$) is identical to the dimension of a
quasi-cyclic CSS code constructed from block matrices similar to
those in Eqs.~(\ref{eq:banded-mix-M1}) and (\ref{eq:banded-mix-M2}),
where each element is replaced by an $L\times L$ circulant matrix:
$$
k=\deg \gcd\left(x^L-1,\res_y(a,b) \right) = \deg \res_y(a,b).
$$
Furthermore, the horizontal edge conditions guarantee that any non-trivial
horizontal chain extends fully across the width, while any
non-trivial vertical co-chain connects the two vertical boundaries;
the full set of such co-chain solutions can be found in a vertical
strip of horizontal size greater than or equal to $\Delta_x$.

Now, let $H_X$ and $H_Z$ be the CSS stabilizer generator matrices of the
constructed quasi-cyclic code $\mathcal{Q}_L$, and let $L_X$ and $L_Z$ be
the corresponding logical generator matrices [see Eq.~(\ref{eq:logical})]. 
We form two pairs of mutually dual matrices,
$$
\begin{pmatrix}H_X \\ L_X\end{pmatrix},
H_Z\quad\text{and} \quad H_X, \begin{pmatrix}H_Z \\ L_Z\end{pmatrix}.
$$
Matrices in each pair remain dual after the first matrix is punctured
and the second matrix is shortened to the strip of size $\Delta_x < L_x < L-\Delta_x$, 
which entirely supports the chosen $X$ co-chain representatives in the rows of $L_X$. 
All non-zero elements of the matrix $L_X L_Z^T$ are preserved after puncturing, 
guaranteeing that the code dimension remains invariant.

The lower bounds on the CSS distances follow from the fact that any
non-trivial chain must connect distinct rough edges, and any co-chain 
must connect smooth edges, with at most $\Delta_x$ horizontal and $\Delta_y$
vertical intervals between subsequent non-zero bits.

\begin{widetext}
\section{Summary of numerical results}
\label{app:num-table}

\begin{longtblr}[ caption = {Parameters of BBS codes with $n\le 300$,
    $k\le 13$}, label = {tab:num-table}, ]{ colspec = {|c|c|c|c|c|c|c|},
    width = 0.75\linewidth, rowhead = 1, row{odd} = {gray9}, row{even}
    = {brown9}, row{1} = {olive9}, hlines, }
    $k$&$d$&$n_{\rm min}$&$a(x,y)$&$b(x,y)$&rotated&$kd^2/n$\\ 
    6& 4& 42&$1+y^2+xy$&$1+x^2+x^3$& Yes&2.286\\
    6& 5& 72&$1+xy^{-3}+xy^{-1}$&$1+y+xy^3$&No&2.083\\
    6& 6& 72&$1+xy^{-1}+x^2y$&$1+xy^{-2}+x^2y^{-1}$&
    No&3.000\\
    6& 7& 112&$1+xy^{-1}+x^2y$&$1+xy^{-2}+x^2y^{-1}$&
    No&2.625\\
    6& 8&
  128&$1+xy^{-1}+x^2y$&$1+xy^{-2}+x^2y^{-1}$&
  No&3.000\\
  6& 9&
  160&$1+xy^{-1}+x^2y$&$1+xy^{-2}+x^2y^{-1}$&
  No&3.038
  \\
  6& 10& 180&$1+xy^{-1}+x^2y$&$1+xy^{-2}+x^2y^{-1}$& No&3.333\\
  6& 11& 216&$1+xy^{-3}+xy^{-2}$&$1+xy^{-1}+x^2$& Yes&3.361\\
6&	12&
247&$1+x^2y^{-2}+x^2y^{-1}$&$1+xy+x^2y$&No&3.498\\
  6& 13& 275&$1+xy^2+x^2y^{-1}$&$1+y^2+xy$& No&3.687\\
  7& 4& 52&$1+xy^{-3}+xy^{-2}$&$1+x^2y^{-1}+x^3y^{-2}$& Yes&2.154\\
  7& 5& 81&$1+x^2y^{-1}+x^2y$&$1+x^{-1}y^2+xy^2$&Yes&2.160\\
  7& 6& 96&$1+x^2y^{-2}+x^2y^{-1}$&$1+xy^2+x^2y$&  No&2.625\\
  7& 7& 126&$1+x^2y^{-2}+x^2y^{-1}$&$1+xy^2+x^2y$&  No&2.722\\
  7& 8&
  156&$1+x^2y^{-1}+x^2y$&$1+xy^2+x^2$&  Yes&2.872\\
  7& 9&
  166&$1+x+x^2y$&$1+y^2+xy^{-1}$& No&3.416\\
  7& 10& 201&$1+xy^{-2}+x^2y$&$1+x+x^2y^{-1}$&Yes&3.483\\
  7& 11& 229&$1+xy^{-1}+xy^2$&$1+y+x^2y$&Yes&3.699\\
  7& 12& 257&$1+xy^{-1}+xy^2$&$1+xy^{-2}+x^2y^{-1}$&No&3.922\\
  7& 13& 294&$1+xy^{-1}+xy^2$&$1+y+x^2y$&Yes&4.024\\
  8& 4& 50&$1+xy^{-3}+xy^{-2}$&$1+x^2y^{-1}+x^3y^{-1}$&Yes&2.560\\
  8& 5& 72&$1+xy^{-3}+xy^{-2}$&$1+x^2y^{-1}+x^3y^{-1}$&Yes&2.778\\
  8& 6& 78&$1+xy^{-3}+xy^{-2}$&$1+x^2y^{-1}+x^3y^{-1}$&Yes&3.692\\
  8& 7& 137&$1+x+x^2y^2$&$1+y^2+xy^{-1}$&No&2.861\\
  8& 8& 144&$1+xy^{-3}+xy^{-2}$&$1+x^2y^{-1}+x^3y^{-1}$&Yes&3.556\\
  8& 9& 184&$1+xy^{-1}+x^2y^2$&$1+xy^2+x^2$&No&3.522\\
  8& 10& 224&$1+xy^2+x^2y^{-1}$&$1+y+x^2y$&Yes&3.571\\
  8& 11& 271&$1+xy^2+x^2y^{-1}$&$1+y+x^2y$&Yes&3.572\\
  8& 12& 288&$1+x^2y^{-2}+x^2y^{-1}$&$1+x+x^2y^2$&No&4.000\\
  9& 4& 56&$1+y^2+y^3$&$1+x^2+x^3$&Yes&2.571\\
  9&
  5& 98&$1+xy^{-3}+xy^{-2}$&$1+x^2y^{-1}+x^3$&Yes&2.296\\
  9&
  6& 110&$1+y^2+xy^{-1}$&$1+xy+x^2y^3$&No&2.945\\
  9&
  7& 142&$1+xy^{-3}+xy^{-2}$&$1+x^2y^{-1}+x^3$&Yes&3.106\\
  9&
  8& 178&$1+xy^{-3}+xy^{-2}$&$1+x^2y^{-1}+x^3$&Yes&3.236\\
  9&
  9& 206&$1+y+xy^3$&$1+xy^{-1}+x^3$&Yes&3.539\\
  9&
  10& 236&$1+xy^{-3}+xy^{-2}$&$1+x^2y^{-1}+x^3$&Yes&3.814\\
  9& 11& 250&$1+xy^{-3}+xy^{-2}$&$1+x^2y^{-1}+x^3$&Yes&4.356\\
  9& 12& 283&$1+xy^{-3}+xy^{-2}$&$1+x^2y^{-1}+x^3$&Yes&4.580\\
  10& 4& 65&$1+y+x^2y^{-1}$&$1+xy+x^3y^2$&No&2.462\\
  10& 5& 102&$1+y+x^2y^{-1}$&$1+x^2y+x^3y^2$&No&2.451\\
  10& 6& 114&$1+y+x^2y^{-1}$&$1+xy+x^3y^2$&No&3.158\\
  10& 7& 147&$1+y+x^2y^{-1}$&$1+xy+x^3y^2$&No&3.333\\
  10& 8& 184&$1+y+x^2y^{-1}$&$1+xy+x^3y^2$&No&3.478\\
  10& 9& 198&$1+y^3+x$&$1+x^2y^{-1}+x^3$&Yes&4.091\\
  10& 10& 227&$1+y^3+xy^2$&$1+x^3+x^3y$&Yes&4.405\\
  10& 11& 277&$1+y^3+xy^3$&$1+x^2y+x^3$&Yes&4.368\\
  10& 12& 292&$1+y^3+xy^2$&$1+x^3+x^3y$&Yes&4.932\\
  11& 4& 75&$1+y^2+xy^{-2}$&$1+x+x^3y^{-1}$&Yes&2.347\\
  11& 5& 115&$1+y^2+xy^{-2}$&$1+x^2+x^3y^{-1}$&Yes&2.391\\
  11& 6& 127&$1+y^2+xy^{-2}$&$1+xy^2+x^2y^3$&No&3.118\\
  11& 7& 176&$1+y^2+xy^{-2}$&$1+xy^2+x^2y^3$&No&3.062\\
  11& 8& 210&$1+y^3+x$&$1+x^2y^{-2}+x^3$&Yes&3.352\\
  11& 9& 234&$1+y^2+x^3y^2$&$1+x+xy^3$&Yes&3.808\\
  11& 10& 285&$1+y^3+x$&$1+x^2y^{-2}+x^3$&Yes&3.860\\
  12& 4& 77&$1+y^2+xy^{-2}$&$1+x+x^3$&Yes&2.494
  \\
  12& 5& 120&$1+y+x^3y^{-1}$&$1+xy+x^3y^2$&No&2.500\\
  12& 6& 144&$1+y^2+x^3y$&$1+xy^{-3}+x^2$&Yes&3.000  \\
  12& 7& 182&$1+y+x^3y^{-1}$&$1+xy+x^3y^2$&No&3.231\\
  12&  8& 216&$1+y+xy^{-3}$&$1+xy+x^3$&Yes&3.556\\
  12& 9& 260&$1+y^2+x^2y^{-1}$&$1+xy+x^2y^3$&No&3.738
  \\
  12& 10& 288&$1+y+x^3y^{-1}$&$1+xy+x^3y^2$&No&4.167\\
  13& 4& 85&$1+y^2+x^2y^{-1}$&$1+x+x^3y^2$&No&2.447\\
  13& 5& 126&$1+y^2+x^3$&$1+x^2y^{-3}+x^2$&Yes&2.579\\
  13& 6& 163&$1+y+x^3$&$1+xy^{-2}+xy^2$&Yes&2.871\\
  13& 7& 189&$1+y+x^3$&$1+xy^{-2}+xy^2$&Yes&3.370\\
  13& 8& 218&$1+y+x^3$&$1+xy^{-2}+xy^2$&Yes&3.817\\
  13& 9& 248&$1+y+x^3$&$1+xy^{-2}+xy^2$&Yes&4.246\\
  13& 10& 281&$1+y+x^3$&$1+xy^{-2}+xy^2$&Yes&4.626\\
\end{longtblr}
\end{widetext}
\bibliography{qc_all,lpp,more_qc,linalg,ldpc,spin,misc,citation,poly}

\end{document}